\documentclass[11pt] {amsart}

\usepackage{amsthm}
\usepackage{amsxtra}
\usepackage{amssymb}
\usepackage{graphicx}

\setlength{\textheight}{8.00in} \setlength{\oddsidemargin}{0.0in}
\setlength{\evensidemargin}{0.0in} \setlength{\textwidth}{6.4in}
\setlength{\topmargin}{0.18in} \setlength{\headheight}{0.18in}
\setlength{\marginparwidth}{1.0in}
\setlength{\abovedisplayskip}{0.2in}

\setlength{\belowdisplayskip}{0.2in}

\setlength{\parskip}{0.05in}
\pagestyle{headings}
\usepackage{color}
\usepackage{amsfonts}
\usepackage{textcomp}
\usepackage{lmodern}

\newtheorem{theorem}{Theorem}

\newtheorem{defin}{Definition}
\newtheorem{lemma}{Lemma}


\newcommand{\bl}{\begin{flushleft}}
\newcommand{\el}{\end{flushleft}}
\newcommand{\br}{\begin{flushright}}
\newcommand{\ert}{\end{flushright}}
\newcommand{\bc}{\begin{center}}
\newcommand{\ec}{\end{center}}

\newcommand{\mcal}[1]{\mathcal{#1}}

\newcommand{\recip}[1]{\frac{1}{#1}}
\newcommand{\imply}{\Rightarrow}

\newcommand{\numList}{\begin{enumerate}}
\newcommand{\enumList}{\end{enumerate}}

\newcommand{\composed}{\text{\textopenbullet}}

\newcommand{\e}{\epsilon}

\newcommand{\re}{\mathbb{R}}

\newcommand{\nn}{\nonumber\\}
\newcommand{\la}{\langle}
\newcommand{\ra}{\rangle}

\newcommand{\fint}{-\!\!\!\!\!\!\int}

\numberwithin{equation}{section}

\title [Quantum Ergodicity for a Class of Mixed Systems]{Quantum Ergodicity for a Class of Mixed Systems}
\author[J. Galkowski]{Jeffrey Galkowski}
\address{Mathematics Department, University of California, Berkeley, 
CA 94720, USA}
\email{jeffrey.galkowski@math.berkeley.edu}

\begin{document}

\begin{abstract}
We examine high energy eigenfunctions for the Dirichlet Laplacian on domains where the billiard flow exhibits mixed dynamical behavior. (More generally, we consider semiclassical Schr\"{o}dinger operators with mixed assumptions on the Hamiltonian flow.) Specifically, we assume that the billiard flow has an invariant ergodic component, $U$, and study defect measures, $\mu$, of positive density subsequences of eigenfunctions (and, more generally, of almost orthogonal quasimodes). We show that any defect measure associated to such a subsequence satisfies $\mu|_{U}=c\mu_L|_{U}$, where $\mu_L$ is the Liouville measure. This proves part of a conjecture of Percival \cite{perc}. 
\end{abstract}

\maketitle

\section{Introduction}
The distribution of eigenfunctions over phase space in the semiclassical limit is an important object of study in the theory of quantum chaos. The fundamental result is a quantum ergodicity theorem of Shnirelman \cite{schnir}, Zelditch \cite{zeld}, and Colin de Verdi$\grave{\text{e}}$re \cite{dever} which states that, for classically ergodic systems, high energy eigenfunctions distribute uniformly in phase space. Since we will study domains with boundary, we note that G\'{e}rard-Leichtnam \cite{Gerr} and Zelditch-Zworski \cite{ZeZw} generalized quantum ergodicity to that case. 

Some progress has been made toward understanding semiclassical limits of eigenfunctions for systems with dynamical behavior that is not completely ergodic. Marklof and O'Keefe, \cite{QESeparated}, examine separated phase spaces for certain maps. More recently, Marklof and Rudnick \cite{Rudnick} have made further strides towards an understanding of quantum behavior for mixed phase space. In particular, they prove that, for rational polygons, the eigenfunctions of the Dirichlet Laplacian equidistribute in configuration space. We use \cite{Rudnick} as inspiration for further work on semiclassical limits for systems with mixed dynamical behavior. 

We examine systems whose phase spaces have invariant subsets, $U$, such that the flow restricted to $U$ is ergodic and the Liouville measure of $U$ is positive. In particular, let $(M,g)$ be a smooth Riemannian manifold of dimension $d$ with a piecewise smooth boundary, $\partial M$. Then $M\subset \widetilde{M}$, where $\widetilde{M}$ is a manifold without boundary to which $g$ extends smoothly, and $\partial M=\bigcup_{j=1}^JN_j$ where $N_j$ are smooth embedded hypersurfaces in $\widetilde{M}.$ Define $\partial^oM\subset\partial M$ to be the open set of all points where the boundary is smooth. Then the complement $\partial M\setminus \partial^oM$ has measure zero.

\begin{figure}[htbp]
\includegraphics[width=5.5in]{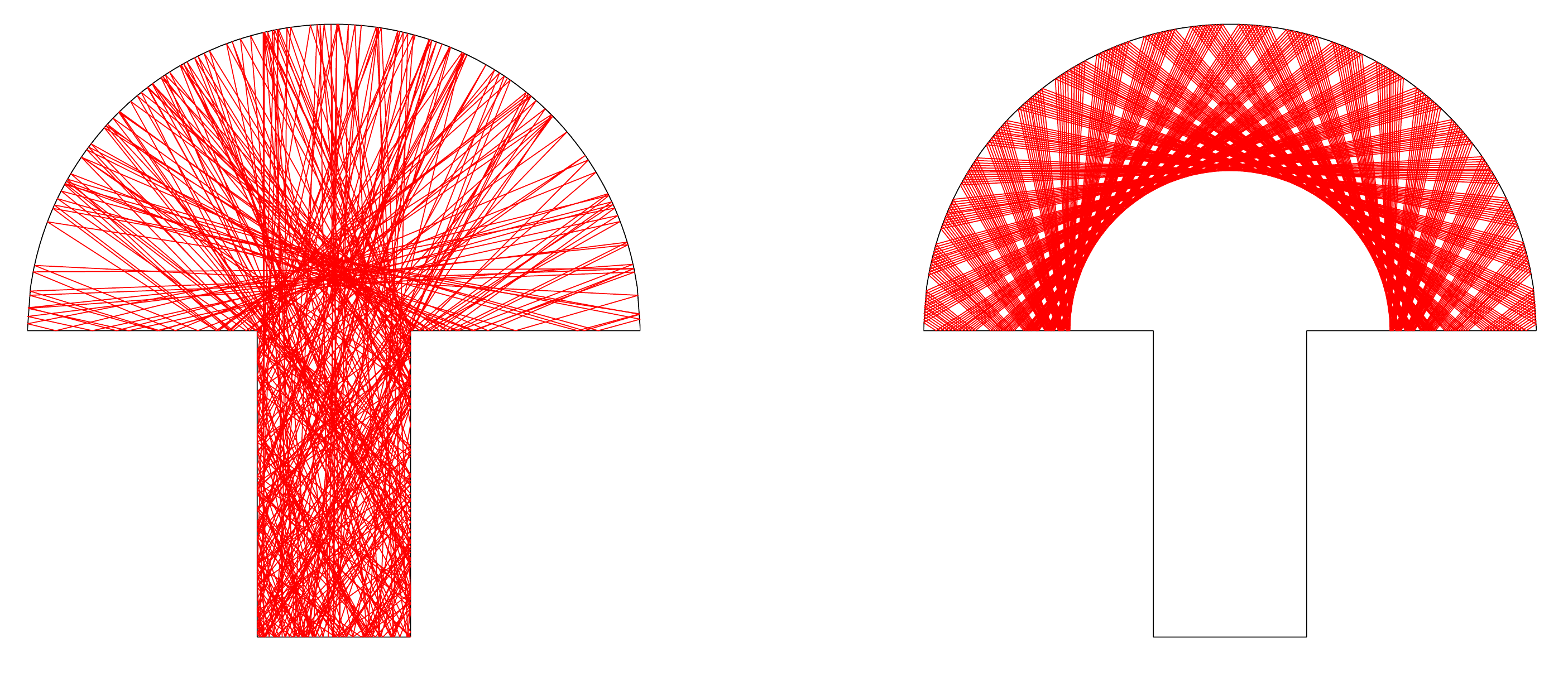}
\begin{center}
\caption{\label{f:1}
We show two billiards trajectories in a Bunimovich mushroom \cite{Bunim} with semicircular hat of radius 1 and centered base of width 1/2 and height 1. On the left, we have a trajectory in the ergodic portion of phase space. On the right, we have one in the integrable portion.}
\end{center}
\end{figure}
\begin{figure}[htbp]
\includegraphics[width=5.25in]{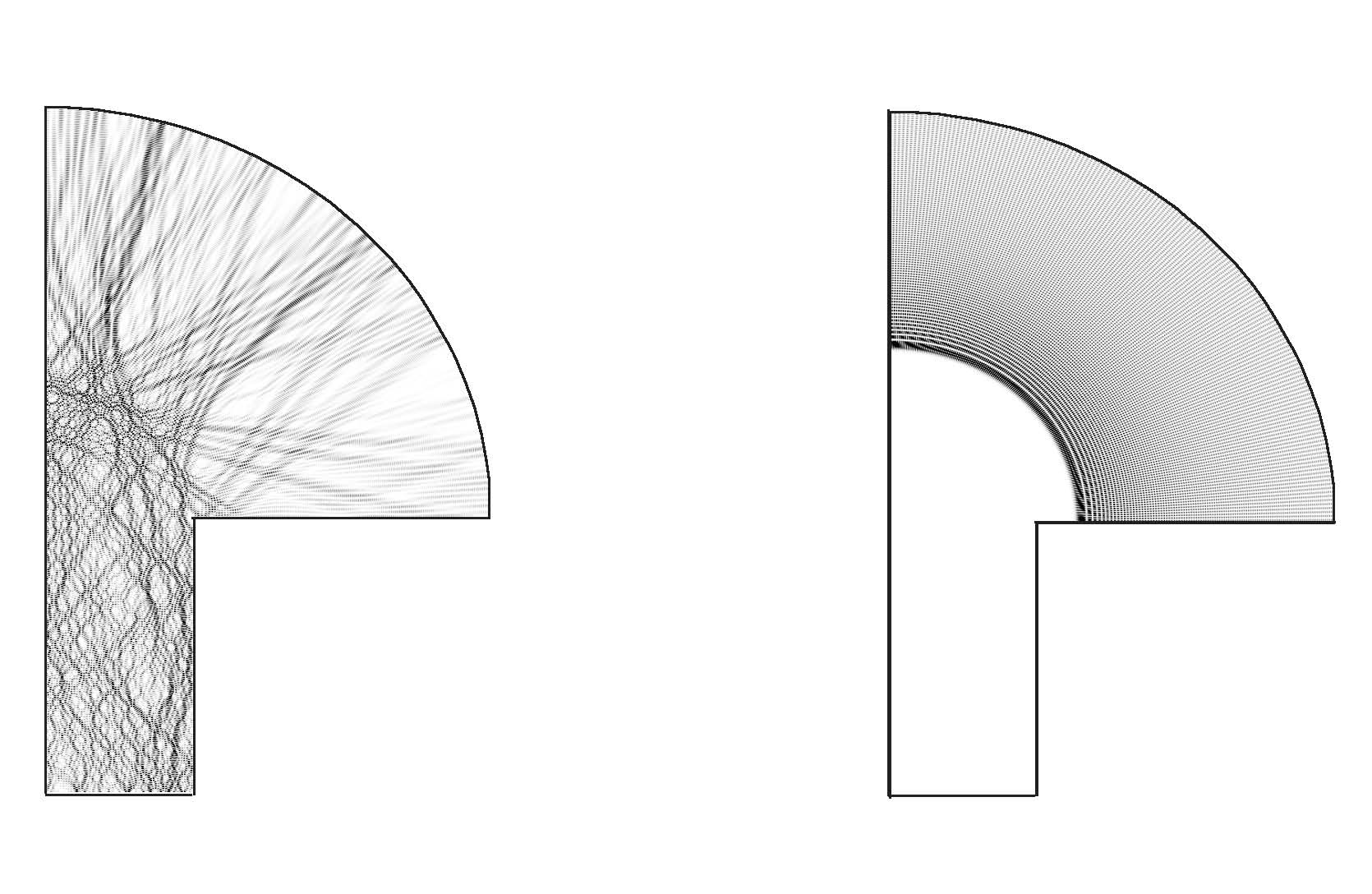}
\begin{center}
\caption{\label{f:2}
We show two high energy eigenfunctions for the laplacian in a Bunimovich mushroom \cite{Bunim}. The left hand eigenfunction spreads uniformly through the ergodic protion of phase space, and the right hand eigenfunction concentrates in the integrable portion. (Images courtesy of A. Barnett and T. Betcke \cite{barn}).}
\end{center}
\end{figure}
We consider $P(h)=-h^2\Delta +V$ with Dirichlet boundary conditions and let $p(x,\xi)$ be its principal symbol. Since $p(x,\xi)$ is smooth up to the boundary, we can extend it smoothly to $T^*\widetilde{M}$. We assume that
\begin{equation}
\label{eqn:assumeE}
\text{for }E\in [a,b],\text{  }dp|_{p^{-1}(E)}\neq 0
\end{equation}

and 
\begin{equation}
\label{eqn:PassumeV}
x\in \partial^o M, V(x)=E\imply dV\notin N_x^*\partial^oM.
\end{equation}
so that $p^{-1}(E)$ and $T_{\partial^oM}^*M$ intersect transversally. 

We then write 
$$p^{-1}(E)\cap T^*_{\partial^oM}M=\Omega_E^+\sqcup \Omega_E^-\sqcup\Omega_E^0,$$
where $(x,\xi)$ lies in $\Omega_E^+$ if the vector $H_px\in T\widetilde{M}$ points outside of $M$, in $\Omega_E^-$ if it points inside $M$, and $\Omega_E^0$ if it is tangent to $\partial M$. The set $\Omega_E^0$ contains the glancing covectors and, under \eqref{eqn:PassumeV}, has measure zero in $p^{-1}(E)\cap T^*_{\partial ^oM}M.$

We define the broken Hamiltonian flow as follows. (see, for example, \cite[Appendix A]{DyZw}). We denote this flow by $\varphi_t:=\exp(H_pt).$ Assume without loss of generality that $t>0$. We consider $\exp(tH_p)(x,\xi)$ defined on $T^*\widetilde{M}$, and let $t_0$ be the first nonnegative time when $\exp(tH_p)(x,\xi)$ hits the boundary. If this happens at a non-smooth point of the boundary, or if $\exp(t_0H_p)(x,\xi)\in \Omega_E^0$, then the flow cannot be extended past $t=t_0$. Otherwise, $\exp(t_0H_p)(x,\xi)\in \Omega_E^+$ and there exists unique $(x_0,\xi_0)\in \Omega_E^-$ such that the natural projections of $\exp(t_0H_p)(x,\xi)$ and $(x_0,\xi_0)$ onto $T^*\partial M$ are the same. We then define $\varphi_t$ inductively, by putting $\varphi_t(x,\xi)=\exp(tH_p)(x,\xi)$ for $0<t<t_0$ and $\varphi_t(x,\xi)=\varphi_{t-t_0}(x_0,\xi_0)$ for $t>t_0.$
Define for $T>0$, the set 
$$\mcal{B}_T\subset T^*M\cap p^{-1}([a,b])$$
to be the closed set of all $(x,\xi)$ such that $\varphi_t(x,\xi)$ intersects a glancing point for some $t\in[-T,T].$ Then, as shown in \cite[Lemma 1]{ZeZw}, $\mcal{B}_T$ has measure zero in $p^{-1}(E)$.

In many cases, quasimodes with specified concentration properties are easier to construct than corresponding eigenfunctions. One example of this is Hassell's use of concentrating quasimodes in \cite{Hass} to show non quantum unique ergodicity in some domains with ergodic billiard flow (for another, see Section \ref{sec:example}). Other connections between quasimodes and quantum ergodicity have been explored in \cite{AnanRivi}, \cite{MarklofLeak}, and \cite{zeldNote}. The utility of quasimodes motivates us to generalize existing quantum ergodicity results to that setting ( Section \ref{sec:QEquasimodes}). Although this is a natural generalization and not unexpected, it does not seem to be available in the literature.  

We make the following definitions (with the obvious analog in the homogeneous setting under the rescaling $h=\lambda^{-1}$).
\begin{defin}
\label{def:posDens}
A positive density set of quasimodes for $P$ on $[a,b]$, $0\leq a<b$ is a collection $\{(\psi_j,E_j),j=1,...\}$ satisfying,
\numList
\item $\|\psi_j\|_{L^2}=1$
\item $\|(P-E_j)\psi_j\|_{L^2}=o(h^{2d+1})$
\item $|\la \psi_j,\psi_k\ra| = o(h^{2d})\quad j\neq k$
\item $|\{E_j\in [a,b]\}|\geq ch^{-d},\, c>0.$
\enumList
\end{defin}
\begin{defin}
\label{def:complete}
Let $E_j,j=1,...$ be all the eigenvalues of $P$ with multiplicity. A complete set of quasimodes for $P$ is a collection $\{\psi_j, j=1,...\}$ satisfying 

$$\|\psi_j\|_{L^2}=1,\quad\|(P-E_j)\psi_j\|_{L^2}=o(h^{2d+1}),\quad|\la\psi_j,\psi_k\ra|=o(h^{2d}), \text{ } j\neq k.$$

\end{defin}

\vbox{\noindent{\bf Remarks: }
\begin{enumerate}
\item Notice that the set of all eigenfunctions is a complete set of quasimodes in the sense of Definition \ref{def:complete}, in particular, a positive density set of quasimodes for $P$ on $[a,b]$. Hence, our results apply to eigenfunctions.
\item The discrepancies for these quasimodes are much smaller than the mean spectral density given by the Weyl law and are necessary so that Hilbert-Schimidt norms can be accurately expressed in terms of quasimodes.
\item Note that, although it is often easy to construct positive density sets of quasimodes with specified concentration properties, it is difficult to construct complete sets of quasimodes with the same. However, the proofs of our results are simplified by using complete sets of quasimodes. To avoid these difficulties, we formulate our results in terms of positive density sets of quasimodes. and prove in Lemma \ref{lem:completeQuas} that it is enough to consider complete sets of quasimodes.
\end{enumerate}
}

\noindent Under assumptions \eqref{eqn:assumeE}, \eqref{eqn:PassumeV}, and 
\begin{equation}
\label{eqn:flowAssume} \varphi_t \text{ is ergodic on } p^{-1}(E),\quad E\in[a,b]
\end{equation}
we have the following analog of quantum ergodicity for a positive density set of quasimodes for $P$. For a semiclassical pseudodiffential operator of order $m$, we write $B\in \Psi_h^m$, and denote its semiclassical symbol, $\sigma_h(B)$ (see, for example, \cite[Section 14.2]{EZB} for definitions).

\begin{theorem}
\label{thm:QEquasi}
Suppose that $(M,g)$ is a compact manifold with a piecewise smooth boundary and \eqref{eqn:flowAssume} holds. Let $\{(\psi_j,E_j),j=1,...\}$ be a positive density set of quasimodes of the Dirichlet realization of $P$ on $[a,b]$ for $0<a<b$. Then there exist a family of subsets 
$$\Lambda(h)\subset \{\psi_j|a\leq E_j\leq b\}$$
of full density such that if $\{\psi_k\}\subset \Lambda(h)$ is a sequence with unique defect measure $\mu$ then there is an $E\in [a,b]$ such that $\mu\equiv (\mu_E(p^{-1}(E)))^{-1} \mu_E.$
\end{theorem}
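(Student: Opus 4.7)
The plan is to adapt the Shnirelman--Zelditch--Colin de Verdière variance argument, in the boundary form of Zelditch--Zworski \cite{ZeZw}, to the quasimode setting. By the auxiliary Lemma \ref{lem:completeQuas} announced in the remarks, it suffices to treat a complete set of quasimodes $\{\psi_j\}$ in the sense of Definition \ref{def:complete}, so I reduce to that case. Fix $B\in\Psi_h^0$ with $\sigma_h(B)$ compactly supported in a neighbourhood of $p^{-1}([a,b])$ away from $\partial M$, and let $\langle B\rangle_E$ denote the normalised Liouville average of $\sigma_h(B)$ over $p^{-1}(E)$. The goal is the variance bound
\begin{equation*}
V(B,h):=h^d\sum_{a\le E_j\le b}\bigl|\langle B\psi_j,\psi_j\rangle-\langle B\rangle_{E_j}\bigr|^2\longrightarrow 0\quad(h\to0).
\end{equation*}

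I would establish this in three steps. First, with $U(t)=e^{-itP/h}$ and $B_T:=\frac{1}{T}\int_0^T U(-t)BU(t)\,dt$, the Duhamel formula applied to the quasimode bound $\|(P-E_j)\psi_j\|=o(h^{2d+1})$ gives $U(t)\psi_j=e^{-itE_j/h}\psi_j+O(|t|h^{2d})$ uniformly in $j$, hence $\langle B_T\psi_j,\psi_j\rangle=\langle B\psi_j,\psi_j\rangle+O(Th^{2d})$ for each fixed $T$. Second, boundary Egorov up to time $T$ (valid away from the measure-zero glancing set $\mcal{B}_T$) identifies $\sigma_h(B_T)$ with the classical time average $\frac{1}{T}\int_0^T\sigma_h(B)\circ\varphi_t\,dt$ up to $O(h)$; ergodicity of $\varphi_t$ on each $p^{-1}(E)$ and Birkhoff's theorem force this to converge in $L^2(p^{-1}([a,b]))$ to the function $E\mapsto\langle B\rangle_E$ as $T\to\infty$. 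Third, setting $A_T:=B_T-f(P)$ where $f(E):=\langle B\rangle_E$ is extended smoothly and realised by functional calculus, the near-orthogonality $|\langle\psi_j,\psi_k\rangle|=o(h^{2d})$ combined with the local Weyl law yields
\begin{equation*}
h^d\sum_j|\langle A_T\psi_j,\psi_j\rangle|^2\le Ch^d\|A_T\chi(P)\|_{\mathrm{HS}}^2+o(1)\le C\|\sigma_h(A_T)\|_{L^2(p^{-1}([a,b]))}^2+o(1)
\end{equation*}
for a suitable cutoff $\chi$ localising at $[a,b]$. The right-hand side vanishes on sending first $h\to 0$ and then $T\to\infty$ by the preceding step, so $V(B,h)\to 0$.

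Once $V(B,h)\to 0$ is known for a fixed $B$, a standard diagonal extraction against a countable dense family of test symbols produces the full-density subfamily $\Lambda(h)\subset\{\psi_j\}$. On $\Lambda(h)$ every matrix element $\langle B\psi_j,\psi_j\rangle$ is close to $\langle B\rangle_{E_j}$, so any subsequence with unique defect measure $\mu$ must satisfy $\mu=(\mu_E(p^{-1}(E)))^{-1}\mu_E$ for the limiting energy $E\in[a,b]$ of the corresponding $E_j$'s. The key obstacle is the Hilbert--Schmidt step above: verifying that almost-orthogonality of strength exactly $o(h^{2d})$, rather than genuine orthonormality, is enough for the trace-type inequality. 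This is precisely why the discrepancy in Definition \ref{def:posDens} is calibrated to the Weyl-law scale $h^{-d}$, and it is the chief point at which the quasimode argument deviates from the classical eigenfunction case.
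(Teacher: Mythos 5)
Your proposal follows essentially the same route as the paper: reduce to a complete set of quasimodes via Lemma~\ref{lem:completeQuas}, use Duhamel to pass to the time-averaged operator $\la B\ra_T$, apply the boundary Egorov theorem to identify its symbol with the classical time average, and close with a Hilbert--Schmidt/local Weyl estimate plus the von Neumann ergodic theorem, followed by diagonal extraction. You also correctly identify the point of tension, namely that Lemmas~\ref{lem:repEigenFunc}--\ref{lem:HS} must convert the $o(h^{2d})$ almost-orthogonality into a usable Hilbert--Schmidt comparison, which is exactly why the quasimode discrepancies are calibrated as they are.

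Two details are only gestured at and would need to be filled in for a complete argument, both of which the paper handles explicitly. First, Egorov near the boundary (Lemma~\ref{lem:FIO}) only applies to operators with $WF_h$ disjoint from the glancing set $\mcal B_T$; the paper therefore splits $B=B_T'+B_T''$ with $WF_h(B_T')\cap\mcal B_T=\emptyset$ and $\|\sigma_h(B_T'')\|_{L^2(p^{-1}[a',b'])}\le T^{-1}$, and estimates the contribution of $B_T''$ separately by H\"older plus the Hilbert--Schmidt bound. Your writeup says ``valid away from the measure-zero glancing set'' but does not explain why the remainder's contribution to the variance vanishes. Second, the Egorov statement requires a spatial cutoff $\chi_T$ supported in $M^o$; controlling the mass of quasimodes near $\partial M$ is the content of Lemma~\ref{lem:CutoutBdry}, and the paper chooses $\chi_T$ so that $\int(1-\chi_T)^2\,d\mu_\sigma\le T^{-1}$. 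Neither point changes the strategy, but both are necessary for the argument to close, so they should be made explicit rather than absorbed into ``boundary Egorov.'' (Also minor: the $L^2$ convergence of the classical time averages is von Neumann's ergodic theorem, not Birkhoff's, though this does not affect the proof.)
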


\noindent Here a defect measure for a sequence $\psi_j$ is the weak limit of the semiclassical measures $\mu_j$ associated to $\psi_j$  (See, for example, \cite[Chapter 5]{EZB} for details.)

\vbox{\noindent{\bf Remarks: }
\begin{enumerate}
\item An equivalent way of formulating this theorem is that for all $\psi_j\in \Lambda(h)$ and $B\in \Psi_h^0(M^o)$ with symbol, $\sigma_h(B)$, compactly supported away from $\partial M$, 
$$\left|\la B\psi_j,\psi_j\ra - \fint_{p^{-1}(E_j)}\sigma_h(B)d\mu_{E_j}\right|\to 0.$$
\item Note that Theorem \ref{thm:QEquasi} applies to positive density subset of eigenfunctions in energy bands $E_j\in [a,b]$, while that in \cite{Helff} is stronger and applies to such subsets of energy shells $E_j\in [a-Ch,a+Ch].$ However, \cite{Helff} does not apply to manifolds with boundary and such an energy shell quantum ergodicity theorem is not known in that case. It would be interesting to obtain such a result.
\end{enumerate}
}

In section \ref{sec:QEsubsets}, we relax the dynamical assumption \eqref{eqn:flowAssume} and instead make the mixed dynamical hypothesese. 
\begin{equation}
\begin{array}{cccc}
\label{eqn:AssumeSubset}
\exists\text{ } U\subset p^{-1}([a,b]),&\mu_E(U\cap p^{-1}(E))>0,&
\mu_E(\partial U\setminus U\cap p^{-1}(E))=0,&E\in [a,b]
\end{array}
\end{equation}
where $\mu_E$ is the Liouville measure on $p^{-1}(E)$ and 
\begin{equation}
\label{eqn:assumeFlowSubset}
\begin{array}{cc}
\varphi_t \text{ is ergodic on } U\cap p^{-1}(E), E\in[a,b]
&
\varphi_{t}(U)=U.\\
\end{array}
\end{equation} 

\noindent{\bf Remark:} Hypotheses \eqref{eqn:AssumeSubset} include the case where $U$ is closed with a boundary of positive measure.

In this case, we have the following result:
\begin{theorem}
\label{thm:QESubset}
Suppose that $(M,g)$ is a compact manifold with a piecewise smooth boundary, and that \eqref{eqn:assumeE}, \eqref{eqn:PassumeV}, \eqref{eqn:AssumeSubset}, and \eqref{eqn:assumeFlowSubset} hold. Let $\{(\psi_j,E_j), j=1,...\}$ be a positive density set of quasimodes of the Dirichlet realization of $P$ on $[a,b]$, $0<a<b$.
Then there exist a family of subsets 
$$\Lambda(h)\subset \{\psi_j|a\leq E_j\leq b\}$$
of full density such that if $\{\psi_k\}\subset \Lambda(h)$ is a further subsequence such that $\{\psi_k\}$ has unique defect measure, $\mu$, then there exist $E\in [a,b]$ and $c$ such that supp $\mu\subset p^{-1}(E)$ and $\mu|_{U\cap p^{-1}(E)}\equiv c\mu_E|_{U\cap p^{-1}(E)}$.
\end{theorem}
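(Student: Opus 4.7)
\textbf{Proof plan for Theorem \ref{thm:QESubset}.}

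The plan is to mirror the proof of Theorem \ref{thm:QEquasi}, but to test only against observables whose symbols are supported inside $U$. First, by the reduction lemma mentioned after Definition \ref{def:complete} (Lemma \ref{lem:completeQuas}), I may assume $\{(\psi_j,E_j)\}$ is a \emph{complete} set of quasimodes. The fact that supp $\mu \subset p^{-1}(E)$ for some single $E\in[a,b]$ is routine: the quasimode relation $\|(P-E_j)\psi_j\|=o(h^{2d+1})$ forces any weak limit $\mu$ to be supported in $p^{-1}(\lim E_j)$, so by passing down $\Lambda(h)$ I may also assume $E_j\to E$.

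The heart of the argument is the following localized variance estimate. Fix $B\in\Psi_h^0(M^o)$ with principal symbol $b=\sigma_h(B)$ compactly supported in the interior of $U$, away from $\partial M$ and away from the glancing set $\mcal{B}_T$. For such $B$ I apply Egorov's theorem for the broken Hamiltonian flow (available since $b$ avoids $\mcal{B}_T$) to the time-averaged operator
\begin{equation*}
\langle B\rangle_T \;=\; \frac{1}{T}\int_0^T e^{itP/h} B\, e^{-itP/h}\, dt,
\end{equation*}
obtaining $\sigma_h(\langle B\rangle_T) = \langle b\rangle_T + O_T(h)$, where $\langle b\rangle_T(x,\xi)=\frac{1}{T}\int_0^T b(\varphi_t(x,\xi))\,dt$. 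The crucial point is the invariance $\varphi_t(U)=U$ from \eqref{eqn:assumeFlowSubset}: it guarantees that supp $\langle b\rangle_T\subset U$ for all $T$. Since $\varphi_t$ is ergodic on $U\cap p^{-1}(E)$ with respect to $\mu_E|_U$, the Birkhoff ergodic theorem gives
\begin{equation*}
\langle b\rangle_T(x,\xi) \;\longrightarrow\; c_b(E) \;:=\; \fint_{U\cap p^{-1}(E)} b\, d\mu_E
\qquad \text{in } L^2(U\cap p^{-1}(E),d\mu_E).
\end{equation*}

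Next, I run the standard quantum variance estimate, which I have available from the proof of Theorem \ref{thm:QEquasi}. Using $[P,\langle B\rangle_T]$-commutativity up to $O(T^{-1})$, the quasimode conditions (the thresholds $o(h^{2d+1})$ and $o(h^{2d})$ are chosen precisely to absorb the error from the Hilbert--Schmidt expansion along a complete system), and the semiclassical local Weyl law, I obtain
\begin{equation*}
\frac{1}{\#\{E_j\in[a,b]\}}\sum_{E_j\in[a,b]} \bigl|\langle B\psi_j,\psi_j\rangle - c_b(E_j)\bigr|^2
\;\leq\; C\,\bigl\| \langle b\rangle_T - c_b \bigr\|_{L^2(U\cap p^{-1}([a,b]))}^2 + o_T(1),
\end{equation*}
and the right side tends to $0$ as $T\to\infty$ by the ergodic theorem. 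A Chebyshev/diagonal argument applied to a countable dense family of symbols $b$ supported in $U^o\setminus\bigcup_T \mcal{B}_T$ then extracts a full density subset $\Lambda(h)$ along which, for any convergent subsequence $\{\psi_k\}\subset\Lambda(h)$ with defect measure $\mu$,
\begin{equation*}
\int b\, d\mu \;=\; c_b(E)\;=\;\frac{1}{\mu_E(U\cap p^{-1}(E))}\int b\, d\mu_E
\end{equation*}
for every such $b$. Taking $c:=1/\mu_E(U\cap p^{-1}(E))$ and using assumption \eqref{eqn:AssumeSubset} (that $\mu_E(\partial U\setminus U\cap p^{-1}(E))=0$) together with the fact that $\mcal{B}_T$ has measure zero, a monotone-class/approximation argument extends the equality to all Borel subsets of $U\cap p^{-1}(E)$, yielding $\mu|_{U\cap p^{-1}(E)}\equiv c\,\mu_E|_{U\cap p^{-1}(E)}$.

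The main obstacle is the careful bookkeeping at the boundary of $U$ and at $\partial M$: one must choose the test symbols $b$ so as to avoid $\partial U$, glancing covectors, and $\partial M$ simultaneously, then invoke the measure-zero hypotheses in \eqref{eqn:AssumeSubset} to pass to the limit. The other technical point is validating Egorov's theorem for the broken flow away from $\mcal{B}_T$, but this is standard and is also what underlies Theorem \ref{thm:QEquasi}, so the same machinery carries through.
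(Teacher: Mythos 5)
There is a genuine gap in the variance step. You write the bound with right-hand side $\|\langle b\rangle_T - c_b\|^2_{L^2(U\cap p^{-1}([a,b]))}$, but the averaging lemma that underlies the quantum variance estimate (Lemma~\ref{lem:HS} together with \cite[Lemma 2.2]{DyZw}) controls $h^d\sum\|\langle B\rangle_T u_j - c_b u_j\|^2$ by the $L^2$ norm of the symbol over the \emph{entire} energy shell $p^{-1}([a',b'])$, not just over $U$. Since $U$ and $U^c$ are both flow-invariant, $\langle b\rangle_T \equiv 0$ on $U^c$, so on $U^c\cap p^{-1}(E)$ the function $\langle b\rangle_T - c_b(E)$ equals the constant $-c_b(E)$. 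Hence
$$\|\langle b\rangle_T - c_b\|^2_{L^2(p^{-1}(E))} \geq |c_b(E)|^2\,\mu_E(U^c\cap p^{-1}(E)),$$
which does not tend to zero as $T\to\infty$. The conclusion $\int b\, d\mu = c_b(E)$ that you deduce from this is in fact false in general: it implicitly assumes $\mu(U)=1$, whereas the theorem only asserts $\mu|_U = c\,\mu_E|_U$ with $c$ possibly zero. Indeed, in the mushroom example of Section~\ref{sec:example} a positive density of modes has $\mu|_U = 0$, so $\langle B\psi_j,\psi_j\rangle \to 0 \neq c_b$ along those, and the variance on the left stays bounded below.

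The fix requires arranging for the subtracted constant to vanish. The paper does this by testing against symbols $a$ supported away from $\partial M$ (not necessarily inside $U$) normalized so that $\int_{U\cap p^{-1}(E)} a\, d\mu_E = 0$; then no constant need be subtracted, so the $U^c$ contribution drops out. To handle the part of $a$ outside $U$, one multiplies by a cutoff $\chi_\e$ supported in a small open neighborhood $U_\e$ of $\overline U$ (using the boundary-measure-zero hypothesis in \eqref{eqn:AssumeSubset}), and splits $\langle \chi_\e a\rangle_T$ into the $1_U$ piece (which $\to 0$ in $L^2$ by von Neumann and the mean-zero condition) and the $1_{U^c}$ piece (supported in $U_\e\setminus U$, hence of $L^2$ norm $O(\e)$). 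A Chebyshev/diagonal argument and Lemma~\ref{lem:measures} then give the result. Your approach can be salvaged only by restricting to $b$ supported in $U$ with $\int_U b\, d\mu_E = 0$ and then proving a Lemma~\ref{lem:measures}-type statement to recover the proportionality; as written, the argument does not close.
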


\noindent{\bf Remark:} Although Theorem \ref{thm:QEquasi} can be seen as a corollary of Theorem \ref{thm:QESubset}, the proof of Theorem \ref{thm:QEquasi} is essential to that of Theorem \ref{thm:QESubset} and, in addition, presenting the direct proof is useful to demonstrate the new ideas involved in obtaining Theorem \ref{thm:QESubset}.

In section \ref{sec:semiclassToClass}, we specialize to the case $P=-h^2\Delta$ and pass from the semiclassical quantization to the Kohn-Nirenberg calculus. For a homogeneous pseudodifferential operator of order $m$, we write $B\in \Psi^m$ and denote its symbol, $\sigma(B)$ (see, for example, \cite[Chapter 18]{Ho3} for definitions).
\begin{theorem}
\label{thm:QEquasiClassical}
Suppose that $(M,g)$ is a compact manifold with a piecewise smooth boundary and \eqref{eqn:flowAssume} holds. Let $\{(\psi_j,E_j),j=1,...\}$ be a positive density set of quasimodes of the Dirichlet realization of $-\Delta$ on $[0,1]$. Then, there is a full density subsequence $\psi_{j_k}$ such that for all pseudodifferential operators $B\in \Psi^0(M^o)$ with symbols compactly supported away from $\partial M$, 
$$\la B\psi_{j_k},\psi_{j_k}\ra \to \fint_{S^*M}\sigma(B)d\mu_{L}.$$
\end{theorem}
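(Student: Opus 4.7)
The plan is to reduce Theorem \ref{thm:QEquasiClassical} to Theorem \ref{thm:QEquasi} by carrying out two translations in parallel: from the homogeneous spectral parameter to the semiclassical parameter $h=\lambda^{-1}$, and from the classical Kohn--Nirenberg calculus to the semiclassical calculus. Under the rescaling noted after Definition \ref{def:complete}, the given positive density set $\{(\psi_j, E_j)\}$ for $-\Delta$ presents itself, for appropriate $h_j\to 0$, as a positive density set of semiclassical quasimodes for $P(h)=-h^2\Delta$ at a fixed energy that we may normalize to $E=1$, on any interval $[a,b]\subset(0,\infty)$ containing $1$. With $V\equiv 0$ the hypothesis \eqref{eqn:PassumeV} is vacuous, \eqref{eqn:assumeE} holds for $p(x,\xi)=|\xi|_g^2$, and \eqref{eqn:flowAssume} is the ergodicity assumption on the geodesic billiard flow. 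Theorem \ref{thm:QEquasi} then produces a full density subsequence $\{\psi_{j_k}\}$ whose unique semiclassical defect measure is the normalized Liouville measure on $p^{-1}(1)=S^*M$, i.e.\ $\mu_L/\mu_L(S^*M)$.

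Next, given $B\in\Psi^0(M^o)$ with $\sigma(B)$ compactly supported in $T^*M^o\setminus 0$ away from $\partial M$, I would fix $\chi\in\CIc((a,b))$ with $\chi(1)=1$ and set
$$B_h := \chi(-h^2\Delta)\,B\,\chi(-h^2\Delta).$$
The semiclassical functional calculus (Helffer--Sj\"ostrand) shows $B_h\in\Psi_h^0(M^o)$ with semiclassical principal symbol $\chi(|\xi|_g^2)^2\,\sigma(B)(x,\xi)$, whose restriction to $p^{-1}(1)=S^*M$ equals $\sigma(B)|_{S^*M}$. The semiclassical quasimode condition together with the spectral theorem gives $\|(I-\chi(-h_{j_k}^2\Delta))\psi_{j_k}\|_{L^2}=o(1)$, so by uniform $L^2$-boundedness of $B$ and $\chi(-h^2\Delta)$,
$$\la B\psi_{j_k},\psi_{j_k}\ra = \la B_h\psi_{j_k},\psi_{j_k}\ra + o(1),$$
and Remark 1 following Theorem \ref{thm:QEquasi} applied to $B_h$ yields $\la B_h\psi_{j_k},\psi_{j_k}\ra\to\fint_{S^*M}\sigma(B)\,d\mu_L$, which is the desired conclusion.

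The main technical hurdle is verifying that $B_h$ really belongs to $\Psi_h^0(M^o)$ with the stated principal symbol, since on a manifold with boundary $\chi(-h^2\Delta)$ is not globally a semiclassical pseudodifferential operator near $\partial M$. However, the interior support hypothesis on $\sigma(B)$ means only the interior microlocal behavior of $\chi(-h^2\Delta)$ enters: there a standard Helffer--Sj\"ostrand parametrix identifies $\chi(-h^2\Delta)$, modulo an $O(h^\infty)$ smoothing error that is harmless when tested against normalized $\psi_{j_k}$, with a semiclassical pseudodifferential operator of symbol $\chi(|\xi|_g^2)$, after which composition with $B$ is a routine symbolic computation. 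Once this identification is in hand, the reduction to Theorem \ref{thm:QEquasi} is direct.
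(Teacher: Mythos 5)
Your proposal correctly identifies the two translations needed (homogeneous $\to$ semiclassical, Kohn--Nirenberg $\to$ semiclassical calculus) and its reduction to Theorem~\ref{thm:QEquasi} is in the same spirit as the paper's. The mechanism differs: the paper truncates the homogeneous symbol near $\xi=0$ by multiplying by the Fourier cutoff $\chi_\e(hD)$ and then sends $\e\to 0$, controlling the discrepancy $\hat{A}-A_\e$ through Lemma~\ref{lem:HS} and the Hilbert--Schmidt bound of \cite[Lemma~2.2]{DyZw}; you instead sandwich $B$ between two copies of the spectral cutoff $\chi(-h^2\Delta)$ and use spectral localization of the quasimodes to discard the difference. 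Your route avoids the $\e$-limit at the price of invoking the Helffer--Sj\"ostrand parametrix to show $\chi(-h^2\Delta)B\chi(-h^2\Delta)\in\Psi_h^{\mathrm{comp}}(M^o)$ with the expected symbol on a manifold with boundary; you flag this honestly, and since $\sigma(B)$ is supported away from $\partial M$ the interior parametrix is indeed enough. The two strategies buy roughly the same thing.

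There are, however, two real gaps. First, the assertion that ``Theorem~\ref{thm:QEquasi} then produces a full density subsequence $\{\psi_{j_k}\}$'' is too quick: Theorem~\ref{thm:QEquasi} gives full density only within the energy band $\{E_j\in[a,b]\}$ for a fixed $a>0$, whereas the desired conclusion requires full density within all of $\{E_j\le 1\}$ (equivalently, full density of the $\lambda_j$ as $\lambda_j\to\infty$). The contribution from the low-frequency tail $\{h\lambda_j<\delta\}$ must be controlled separately; the paper does this explicitly by noting
$h^d\sum_{h\lambda_j\in[0,\delta)}|\la (\hat{A}-\bar{a}_0)\psi_j,\psi_j\ra| = O(\delta^d)$
via the Weyl law, and then letting $\delta\to 0$. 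Your rescaling ``to a fixed energy $E=1$'' suggests a picture in which each $\psi_j$ carries its own $h_j=\lambda_j^{-1}$, but the counting hypothesis in Definition~\ref{def:posDens} and the proof of Theorem~\ref{thm:QEquasi} require a single $h\to 0$ with $O(h^{-d})$ quasimodes per window; so the correct framing is the paper's (fix $h$, take $h\lambda_j\in[a,b]$), and the tail has to be estimated. Second, you extract the conclusion for one $B$ at a time; a single full density subsequence valid for every $B\in\Psi^0(M^o)$ simultaneously requires the standard diagonal argument over a countable dense family of operators, which the paper applies but your write-up does not mention. Both gaps are routine to fill but are part of a complete proof.
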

\noindent
{\bf Remark:} Although the passage from Theorem \ref{thm:QEquasi} to \ref{thm:QEquasiClassical} is known in the case of eigenfunctions and is not unexpected for quasimodes, we include it here becuase neither the result for quasimodes nor the passage from the semiclassical to the homogeneous case seem to appear in the literature.

We also specialize Theorem \ref{thm:QESubset} to the homogeneous setting
\begin{theorem} 
\label{thm:QESubsetClass}
Suppose that $(M,g)$ is a compact manifold with a piecewise smooth boundary and that there exists 
$$\begin{array}{ccccc}
U\subset S^*M,&\mu_L(U)>0, &\mu_L(\partial U\setminus U)=0, &\varphi_t\text{ is ergodic on }U,&\varphi_t(U)=U\end{array}.$$
Let $\{(\psi_j,E_j),j=1,...\}$ be a positive density set of quasimodes of the Dirichlet realization of $-\Delta$ on $[0,1]$. Then there is a full density subsequece of $\psi_j$ such that if any further subsequence $\psi_{j_k}$ has unique defect measure $\mu$, then $\mu|_U=c\mu_L|_U$ for some $c$. 
\end{theorem}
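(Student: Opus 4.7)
The plan is to reduce the theorem to Theorem \ref{thm:QESubset} via a dyadic frequency decomposition, in direct analogy with how Theorem \ref{thm:QEquasiClassical} specializes Theorem \ref{thm:QEquasi}.

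First, extend $U \subset S^*M$ to the conic set $\widetilde U = \{(x,r\xi) : r>0,\ (x,\xi)\in U\} \subset T^*M\setminus 0$. Homogeneity of the geodesic flow ensures $\widetilde U \cap p^{-1}(E)$ (with $p(x,\xi)=|\xi|_g^2$) is ergodic and invariant, of positive measure, with boundary of zero measure, for every $E>0$; so hypotheses \eqref{eqn:AssumeSubset} and \eqref{eqn:assumeFlowSubset} hold on any $[a,b]\subset(0,\infty)$, in particular on $[1,4]$.

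Second, set $h_n = 2^{-n}$ and $I_n = \{j : h_n^{-2}\le E_j \le 4h_n^{-2}\}$. By the Weyl law, $|I_n| \sim c\, h_n^{-d}$, and a straightforward rescaling shows that $\{\psi_j : j\in I_n\}$ is a positive density family of quasimodes of $P(h_n) = -h_n^2 \Delta$ at energies $h_n^2 E_j \in [1,4]$: the homogeneous residual $o(E_j^{-(2d-1)/2})$ for $-\Delta$ corresponds to the semiclassical residual $o(h_n^{2d+1})$ for $P(h_n)$, and similarly for the orthogonality. Applying Theorem \ref{thm:QESubset} at each scale produces $\Lambda(h_n)\subset I_n$ of full density with the advertised conclusion on $\widetilde U\cap p^{-1}([1,4])$; set $\Lambda = \bigcup_n \Lambda(h_n)$. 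Because $|I_n|$ grows geometrically in $n$, the bandwise full density $|\Lambda(h_n)|/|I_n|\to 1$ implies $\Lambda$ has full density in the natural counting order on $\{\psi_j\}$.

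Third, pass from the semiclassical to the homogeneous calculus. Let $\{\psi_{j_k}\}\subset\Lambda$ have unique homogeneous defect measure $\mu$. For $B\in\Psi^0(M^o)$ with symbol compactly supported away from $\partial M$, choose a cutoff $\chi(|\xi|_g)\in C^\infty_{\rm c}((1/2,4))$ equal to $1$ on $[1,2]$, so that at each scale $h_n$ the operator $B$ is realized as a semiclassical operator with principal symbol the degree-zero function $\chi(|\xi|_g)\sigma(B)$, now supported in $p^{-1}([1/4,16])$. Extract a subsubsequence along which the semiclassical measures of $\psi_{j_k}$ at scale $h_{n_k}$ (with $j_k\in I_{n_k}$) converge weakly to some measure $\nu$ on $T^*M\setminus 0$. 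Applying Theorem \ref{thm:QESubset} in the appropriate band gives $\nu|_{\widetilde U \cap p^{-1}(E)} = c\,\mu_E|_{\widetilde U \cap p^{-1}(E)}$ for each $E\in[1,4]$ in the support. Pushing forward under the radial projection $(x,\xi)\mapsto(x,\xi/|\xi|_g)$ identifies $\nu$ with $\mu$, sends $\widetilde U\cap p^{-1}(E)$ onto $U\subset S^*M$, and sends $\mu_E$ to a positive multiple of $\mu_L$, yielding $\mu|_U = c'\mu_L|_U$.

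The main obstacle is the coordination in this last step: the indices $j_k$ generically populate infinitely many dyadic bands, each carrying its own semiclassical defect measure on a different energy shell $p^{-1}(E)$. One must verify that these assemble via radial projection into the single homogeneous defect measure $\mu$ on $S^*M$ and that the resulting proportionality constant on $U$ is well defined. The degree-zero homogeneity of order-$0$ symbols, which makes $\langle B\psi_{j_k},\psi_{j_k}\rangle$ the same whether $B$ is viewed classically or semiclassically at scale $h_{n_k}$, together with the invariance of $\widetilde U$ under radial scaling, is what forces this consistency.
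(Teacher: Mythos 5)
Your proof takes essentially the same approach as the paper's: define the conic extension of $U$ (the paper's $\hat{U}$, your $\widetilde U$), apply Theorem \ref{thm:QESubset} to semiclassical bands away from the zero section, and use degree-zero homogeneity of order-$0$ symbols to identify the Liouville integral on $\hat{U}\cap p^{-1}(E)$ with the Liouville integral on $U\subset S^*M$. The only cosmetic difference is that you use dyadic bands $[1,4]$ at scales $h_n=2^{-n}$ while the paper keeps a single semiclassical parameter $h$, works on $[\delta,1]$, and sends $\delta\to 0$, with the Weyl law controlling the low-frequency tail $\{h\lambda_j\in[0,\delta)\}$ in both cases.
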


\noindent{\bf Remark:} In the homogeneous and boundaryless case, Rivi\'{e}re \cite{Riviere} recently proved a theorem on accumulation points of semiclassical measures that gives similar results for separated phase spaces. He also provides examples of separated phase spaces for the geodesic flow on boundaryless manifolds. 

Percival \cite{perc} made conjectures on eigenfunctions that were numerically verified by Barnett and Betcke in \cite{barn} in the case of mushroom billiards. In particular, Percival used the quantum-classical correspondence principle to state that most eigenfunctions concentrate either entirely in the ergodic portion of phase space or in the integrable portion and, moreover, that the eigenfunctions concentrating in the ergodic portion of phase space spread uniformly over that piece of phase space. Although we are not able to show the dichotomy between integrable and chaotic eigenfunctions, Theorems \ref{thm:QESubsetClass} and \ref{thm:QESubset} provide rigorous proofs of the fact that most eigenfunctions spread uniformly in the ergodic portions of phase space.

Finally, in Section \ref{sec:example}, we apply our results to the Dirichlet Laplacian for mushroom billiards. Figure \ref{f:1} shows two billiards trajectories, one in the ergodic region and one in the integrable region.  (For a complete description of the billiard map on mushrooms see \cite{Bunim}.)  \\

\noindent
{\sc Acknowledgemnts.} The author would like to thank Maciej Zworski for valuable discussion, St\'{e}phane Nonnenmacher and Ze\'{e}v Rudnick for pointing him toward mushroom billiards and \cite{Rudnick}, Semyon Dyatlov for advice on boundary value problem quantum ergodicity, Alex Barnett for allowing him to use the images in Figure \ref{f:2} and referring him to \cite{perc}, and the anonymous reviewer for many helpful comments. The author is grateful to the Erwin Schr\"{o}dinger Institue for support during the Summer School on Quantum Chaos and to the National Science Foundation for support under the National Science Foundation Graduate Research Fellowship Grant No. DGE 1106400, and for support during the Erwin Schr\"{o}dinger Institute's Summer School on Quantum Chaos.

\section{Quantum Ergodicity for Quasimodes}
\label{sec:QEquasimodes}
In this section, we demonstrate how to adapt quantum ergodicity results for eigenfunctions to positive density sets of quasimodes. 

We will need the following lemmas relating various spectral quantities for operators to their corresponding expressions involving complete sets of quasimodes. 
\begin{lemma}
\label{lem:repEigenFunc}
Let $\{u_k, k=1,...\}$ be the eigenfunctions of $P$ corresponding to $E_k$ and let $\{\psi_j, j=1,...\}$ be a complete set of quasimodes for $P$. Then,
$$u_j=\sum_{|E_j-E_k|<Ch^{d+1}}c_{jk}\psi_k+o(h^{d}).$$
\end{lemma}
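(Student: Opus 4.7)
The plan is to use the quasimode equation to derive pointwise decay of $\langle u_j,\psi_k\rangle$ in the eigenvalue gap $|E_j-E_k|$, and then combine this with the completeness hypothesis (which matches the cardinalities of quasimodes and eigenvalues in every energy window) to invert the relation on a small window around $E_{j_0}$.

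First, I would pair the quasimode identity $(P-E_k)\psi_k=o(h^{2d+1})$ against $u_j$: using $Pu_j=E_ju_j$ and the self-adjointness of $P$ gives
$$(E_j-E_k)\langle u_j,\psi_k\rangle=\langle u_j,(P-E_k)\psi_k\rangle,$$
and hence the pointwise bound
$$|\langle u_j,\psi_k\rangle|\leq \frac{o(h^{2d+1})}{|E_j-E_k|},\qquad E_j\neq E_k.$$
This is strictly stronger than the $\ell^2$-tail bound one gets by squaring and summing.

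Next, fix $j_0$ and let $W$ be an interval of half-width of order $Ch^{d+1}$ about $E_{j_0}$. A pigeonhole argument on a small shift of the endpoints arranges that $\partial W$ lies at distance $\geq h^{d+1}$ from every eigenvalue of $P$; denote by $\Pi_W$ the spectral projector of $P$ onto $W$. The pointwise bound above, summed against the Weyl density of eigenvalues, gives $\|(I-\Pi_W)\psi_k\|=o(h^{d+1/2})$ for every $k$ with $E_k\in W$. Meanwhile, the near-orthogonality $|\langle\psi_k,\psi_l\rangle|=o(h^{2d})$ together with the Schur test shows that the restricted Gram matrix $(\langle\psi_k,\psi_l\rangle)_{E_k,E_l\in W}$ is within $o(1)$ of the identity in operator norm, hence invertible. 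The projected family $\{\Pi_W\psi_k:E_k\in W\}$ is therefore a basis of the range of $\Pi_W$, whose dimension matches $\#\{k:E_k\in W\}$ by completeness. I then expand $u_{j_0}=\sum_kc_k\Pi_W\psi_k$ exactly inside the range of $\Pi_W$, with $\|c\|_{\ell^2}=O(1)$ from the inverted Gram matrix, and replace each $\Pi_W\psi_k$ by $\psi_k$. The residual equals $-\sum_kc_k(I-\Pi_W)\psi_k$; to bound it, I would apply the Schur test to the matrix $R_{jk}:=\langle u_j,\psi_k\rangle$ for $E_j\notin W$, $E_k\in W$, whose row and column sums, thanks to the pointwise decay and the $h^{d+1}$ spectral gap at $\partial W$, are both of size at most $O(h^{d+1}\log(1/h))$. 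This yields $\|R\|_{op}=o(h^{d+1/2})$ and hence a residual of norm $o(h^d)$.

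The main obstacle is this final estimate. A direct Hilbert--Schmidt bound, or a coefficient-wise Cauchy--Schwarz estimate, loses a factor equal to the Weyl-remainder count of eigenvalues in $W$ (possibly as large as $O(h^{1-d})$), which is incompatible with a target error of $o(h^d)$. The pointwise matrix-element bound coming from the pairing identity, rather than merely the $\ell^2$-tail bound that follows from $\|(P-E_k)\psi_k\|_{L^2}=o(h^{2d+1})$, is what makes the Schur test sharp enough to close the argument.
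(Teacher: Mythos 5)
Your plan is structurally the same as the paper's: choose a spectral window around $E_{j_0}$ with gaps of width $\gtrsim h^{d+1}$ at the endpoints (via Weyl law and pigeonhole), show that the projected quasimodes $\Pi_W\psi_k$ span $\operatorname{ran}\Pi_W$ using near-orthogonality of the $\psi_k$ and the smallness of $(I-\Pi_W)\psi_k$, and then invert. The paper delegates the dimension count to \cite[Prop.\ 32.4]{Lazutkin}; you do the linear algebra by hand and add the pointwise resolvent-type bound $|\langle u_j,\psi_k\rangle|\le o(h^{2d+1})/|E_j-E_k|$, which is correct and in the right spirit. You also correctly identify that a naive Cauchy--Schwarz on the residual loses a factor of $\sqrt{N}$ where $N=\#\{k:E_k\in W\}$, and that this is the crux.

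The gap is in the Schur test you propose to close this. Both the row sums $\sum_{E_k\in W}|R_{jk}|$ and the column sums $\sum_{E_j\notin W}|R_{jk}|$ are only $O(h^{d+1}\log(1/h))$ if you may ``sum against the Weyl density,'' i.e.\ if the eigenvalues are distributed with density $\sim h^{-d}$ at scales $\sim h^{d+1}$. But Weyl's law only controls the counting function up to an additive error $o(h^{-d})$ that does not decrease with the width of the interval, so the number $N$ of eigenvalues in the $O(h^{d+1})$-wide window $W$ can a priori be as large as $o(h^{-d})$, with a matching cluster just outside $\partial W$. In that regime, each entry $R_{jk}$ is only $o(h^d)$ (from the gap $|E_j-E_k|\ge h^{d+1}$), so a row sum with $N$ terms is only $o(N h^d)=o(1)$, not $O(h^{d+1}\log(1/h))$; the column sums degrade similarly. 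Consequently $\|R\|_{op}$ is not controlled at the scale you need, and the residual bound $o(h^d)$ does not follow. (Relatedly, your intermediate claim $\|(I-\Pi_W)\psi_k\|=o(h^{d+1/2})$ is stronger than what the hypotheses yield: the direct $\ell^2$ argument, $\|(I-\Pi_W)\psi_k\|\le \|(P-E_k)\psi_k\|/\operatorname{dist}(E_k,\sigma(P)\setminus W)=o(h^{2d+1})/h^{d+1}=o(h^d)$, gives only $o(h^d)$, and the Weyl-density summation used to improve it is exactly the unjustified step.) To make your argument rigorous you would need either an a priori bound $N=O(1)$ on the local eigenvalue count (so that $\|c\|_{\ell^2}\cdot\bigl(\sum_k\|(I-\Pi_W)\psi_k\|^2\bigr)^{1/2}=O(\sqrt N)\,o(h^d)=o(h^d)$ already does the job) or the precise statement from Lazutkin that the paper invokes, which presumably packages exactly this type of control.
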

\begin{proof}
First, observe that, near an energy level $E$, there exists $(c_1(h),c_2(h))$ with $|c_1(h)-c_2(h)|\geq ch^{d+1}$ such that  $((E-c_2(h),E-c_1(h))\cup(E+c_1(h),E+c_2(h)))\cap\{E_j,j=1,...\}=\emptyset$ -- if this were false, then the spectrum of $P$ would violate the Weyl law.

Now, let $\Lambda=\{\psi_j|E_j\in (E-c_1(h),E+c_1(h))\}$ and $\Pi$ be the spectral projection onto $[E-c_1(h),E+c_1(h)]$. By the Weyl law, $|\Lambda|\leq Ch^{-d}$. Therefore, by \cite[Proposition 32.4]{Lazutkin}, $$\dim\text{ }(\Pi \text{ span }(\Lambda))=|\Lambda|,$$ since $$o(h^{2d})+o(h^{2d+1})O(h^{-d-1})=o(h^d).$$ But, rank $\Pi=|\Lambda|$. Hence, span $ \Lambda$ = range $\Pi$ and the result follows from the almost orthogonality of $\psi_j$.
\end{proof}

\begin{lemma}
Let $A$ be a Hilbert-Schmidt operator and $\{\psi_j,j=1,...\}$ be a complete set of quasimodes for $P$ and $a<b$. Then,
\begin{equation}
\label{eqn:HS}
h^d\sum_{a\leq E_j\leq b}\|A\psi_j\|^2=h^d\|\Pi_{[a,b]}A\|^2_{HS}+o(1)\end{equation}
and if $A$ is of trace class, 
\begin{equation}\label{eqn:Trace}h^d\sum_{a\leq E_j\leq b}\la A\psi_j,\psi_j\ra=h^d\text{Tr}\Pi_{[a,b]}A+o(1).\end{equation}
\label{lem:HS}
\end{lemma}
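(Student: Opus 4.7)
The strategy is to reduce both identities to their analogs for the eigenfunction basis $\{u_k\}$, where they follow immediately from the spectral theorem, and then carry out the reduction via Lemma~\ref{lem:repEigenFunc}. The underlying picture is that $\{\psi_j\}_{E_j\in[a,b]}$ is an almost orthonormal basis of the finite dimensional spectral subspace $V_{[a,b]} := \operatorname{range}\Pi_{[a,b]}$, whose dimension $N_h = O(h^{-d})$ is controlled by the Weyl law.

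The key preparatory step is to show that $\Pi_{[a,b]}\psi_j = \psi_j + o(h^d)$ in $L^2$ whenever $E_j \in [a,b]$. Exactly as in the proof of Lemma~\ref{lem:repEigenFunc}, the Weyl law produces a spectral gap of size $\geq Ch^{d+1}$ abutting each endpoint of $[a,b]$; combined with $\|(P-E_j)\psi_j\|=o(h^{2d+1})$ and the functional calculus, this yields $\|(I-\Pi_{[a,b]})\psi_j\|\leq o(h^{2d+1})/Ch^{d+1} = o(h^d)$. Together with almost orthogonality, this implies that the Gram matrix $G = (\la\psi_j,\psi_k\ra)_{E_j,E_k\in[a,b]}$ satisfies $\|G-I\|\leq N_h\cdot o(h^{2d}) = o(h^d)$, and correspondingly the coefficient matrix $C = (c_{jk})$ appearing in Lemma~\ref{lem:repEigenFunc} satisfies $C^*C = I + o(h^d)$, i.e.\ is almost unitary when viewed as a map on $V_{[a,b]}$.

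With this in hand, the trace identity \eqref{eqn:Trace} is obtained by expanding $\operatorname{Tr}\Pi_{[a,b]}A = \sum_{E_k\in[a,b]}\la Au_k,u_k\ra$ in the eigenbasis, substituting $u_k = \sum_\ell c_{k\ell}\psi_\ell + r_k$ with $\|r_k\| = o(h^d)$, and bounding the cross contributions involving $r_k$ via Cauchy-Schwarz by $\|A\|\cdot o(h^d)$ per term. There are $N_h = O(h^{-d})$ terms, so after multiplying by $h^d$ the total error is $o(1)$; the main term becomes $\sum_{E_\ell\in[a,b]}\la A\psi_\ell,\psi_\ell\ra$ once $C^*C$ is replaced by $I$. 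For the Hilbert--Schmidt identity \eqref{eqn:HS}, I would write $\|A\psi_j\|^2 = \sum_k|\la A\psi_j,u_k\ra|^2$, use the first step to restrict the inner sum (up to $o(h^d)$) to indices with $E_k\in[a,b]$, and then invoke the almost-unitary $C$ to rewrite everything in terms of $\{u_k\}_{E_k\in[a,b]}$, producing $\|\Pi_{[a,b]}A\|_{HS}^2$ up to a $o(h^{-d})$ additive error.

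The main obstacle will be careful bookkeeping of errors in the Hilbert--Schmidt case, since the sums are quadratic in $\psi_j$. Naively, errors of the form $o(h^d)\cdot\|A\|$ per term can accumulate; the saving is that $N_h = O(h^{-d})$ and the prefactor is $h^d$, so the total error budget $h^d\cdot N_h\cdot o(h^d) = o(h^d)$ remains small. Making this precise while keeping the estimates uniform as $h\to 0$ is the technical heart of the argument, but no new ideas beyond those already used in the proof of Lemma~\ref{lem:repEigenFunc} and the Weyl law should be required.
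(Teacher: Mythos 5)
Your overall strategy — expand in the eigenbasis via Lemma~\ref{lem:repEigenFunc}, exploit almost unitarity of the coefficient matrix $C=(c_{jk})$, and control the cumulative error by the Weyl count $N_h=O(h^{-d})$ against the prefactor $h^d$ — is exactly the paper's approach, and your treatment of the trace identity~\eqref{eqn:Trace} and of the preparatory facts ($\Pi_{[a,b]}\psi_j=\psi_j+o(h^d)$, $C^*C=I+o(h^d)$) is sound.

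However, the sketch for the Hilbert--Schmidt identity~\eqref{eqn:HS} contains a genuine gap. You propose to write $\|A\psi_j\|^2=\sum_k|\la A\psi_j,u_k\ra|^2$ and then ``restrict the inner sum (up to $o(h^d)$) to indices with $E_k\in[a,b]$,'' justified by $\Pi_{[a,b]}\psi_j\approx\psi_j$. But the tail you are dropping is
\[
\sum_{E_k\notin[a,b]}|\la A\psi_j,u_k\ra|^2=\|(I-\Pi_{[a,b]})A\psi_j\|^2,
\]
and this is \emph{not} small: $\Pi_{[a,b]}\psi_j\approx\psi_j$ controls where $\psi_j$ lives spectrally, not where $A\psi_j$ lives, and $A$ does not (even approximately) commute with $\Pi_{[a,b]}$. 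Dropping these terms would instead land you on $\sum_j\|\Pi_{[a,b]}A\psi_j\|^2\approx\|\Pi_{[a,b]}A\Pi_{[a,b]}\|_{HS}^2$, which differs from the target $\sum_{E_k\in[a,b]}\|Au_k\|^2=\|A\Pi_{[a,b]}\|_{HS}^2$ by the non-negligible quantity $\|(I-\Pi_{[a,b]})A\Pi_{[a,b]}\|_{HS}^2$. The fix is what the paper actually does: do not Parseval-expand $A\psi_j$ at all, but instead substitute $\psi_j=\sum_{E_k\in[a,b]}\overline{c_{jk}}u_k+o(h^d)$ into \emph{both} slots of $\la A\psi_j,A\psi_j\ra$, obtaining $\sum_{k,k'}\overline{c_{jk}}c_{jk'}\la Au_k,Au_{k'}\ra$ plus errors controlled by $\|A\|\cdot o(h^d)$, then sum over $j$ and use $\sum_j\overline{c_{jk}}c_{jk'}=\delta_{kk'}+o(h^d)$ to reduce to $\sum_{E_k\in[a,b]}\|Au_k\|^2$ with an $o(1)$ total error after multiplying by $h^d$.
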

\begin{proof}
First, let $(c_{jk})=\la u_k,\psi_j\ra$. Then, by Lemma \ref{lem:repEigenFunc},
$$u_j=\sum_kc_{jk}\psi_k+o(h^{d})\quad \psi_j=\sum_k\overline{c_{jk}}u_k,$$
where all sums are taken over $k$ such that $|E_k-E_j|<Ch^{d+1}$.
Observe that 
$$u_j=\sum_kc_{jk}\psi_k+o(h^d)=\sum_{k,k'}c_{jk}\overline{c_{kk'}}u_{k'}+o(h^d).$$
Hence, by the orthogonality of $u_j$,
$$\sum_{k}c_{jk}\overline{c_{kk'}}=\delta_{kj}+o(h^d).$$

Now, 
\begin{eqnarray}
h^d\sum_{a\leq E_j\leq b}\la A\psi_j,A\psi_j\ra&=&h^d\sum_{j,k,k'}c_{jk}\overline{c_{jk'}}\la Au_k,Au_k'\ra =h^d\sum_{k,k'}(\delta_{kk'}+o(h^{d}))\la Au_k,Au_{k'}\ra \nn 
&=&h^d\|\Pi_{[a,b]}A\|^2_{HS}+h^d\sum_{k,k'}o(h^{d})\la Au_k,Au_{k'}\ra=h^d\|\Pi_{[a,b]}A\|^2_{HS}+o(1)\nonumber
\end{eqnarray}
where the last equality follows from the fact that there are at most $O(h^{-d})$ terms in each sum.

\noindent An analgous argument shows that \eqref{eqn:Trace} holds. 
\end{proof}

To prove Theorem \ref{thm:QEquasi}, we need the following lemma similar to \cite[Lemma A.2]{DyZw}
\begin{lemma}
\label{lem:CutoutBdry}
Let $\chi\in C_c^\infty(M^o)$ with $0\leq \chi \leq 1.$ Then for $a'<a<b<b'$,
$$(2\pi h)^d\sum_{E_j\in[a,b]}\int_M(1-\chi)|\psi_j|^2d\textup{Vol}\leq \int_{T^*M\cap p^{-1}([a',b'])}(1-\chi )d\mu_\sigma+o(1)\text{ as }h\to 0.$$
\end{lemma}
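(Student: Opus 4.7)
The plan is to interpret the weighted sum as a trace against the spectral projection $\Pi_{[a,b]}$ via Lemma \ref{lem:HS}, sandwich $\Pi_{[a,b]}$ between $0$ and a smooth energy cutoff $f(P)$, and then apply the semiclassical local Weyl law.

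Let $A$ be multiplication by $1-\chi$, a bounded positive self-adjoint operator whose symbol is supported in a compact subset of $M^o$. Since $\int_M(1-\chi)|\psi_j|^2\, d\text{Vol}=\la A\psi_j,\psi_j\ra$, identity \eqref{eqn:Trace} immediately gives
$$(2\pi h)^d\sum_{E_j\in[a,b]}\la A\psi_j,\psi_j\ra=(2\pi h)^d\,\text{Tr}(\Pi_{[a,b]}A)+o(1).$$
Now pick $f\in C_c^\infty((a',b'))$ with $0\leq f\leq 1$ and $f\equiv 1$ on $[a,b]$, so that $\Pi_{[a,b]}\leq f(P)$ as self-adjoint operators. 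Because $A\geq 0$, cyclicity of the trace and positivity of $A^{1/2}(f(P)-\Pi_{[a,b]})A^{1/2}$ yield
$$\text{Tr}(\Pi_{[a,b]}A)=\text{Tr}(A^{1/2}\Pi_{[a,b]}A^{1/2})\leq \text{Tr}(A^{1/2}f(P)A^{1/2})=\text{Tr}(f(P)A).$$

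To finish, I would invoke the local Weyl law for $f(P)A$. Helffer-Sj\"ostrand functional calculus realizes $f(P)$ as a semiclassical operator with principal symbol $f(p)$, and since $A$ is compactly supported in $M^o$, the operator $f(P)A$ is trace class with
$$(2\pi h)^d\,\text{Tr}(f(P)A)\;\longrightarrow\;\int_{T^*M}f(p)(1-\chi)\,d\mu_\sigma,\qquad h\to 0.$$
The right-hand side is bounded above by $\int_{T^*M\cap p^{-1}([a',b'])}(1-\chi)\,d\mu_\sigma$ since $0\leq f\leq \mathbf{1}_{[a',b']}$ and $1-\chi\geq 0$. Chaining the three estimates proves the lemma.

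The principal technical obstacle is justifying the local Weyl asymptotic for $f(P)A$ when $P$ is the Dirichlet realization on $M$, since $f(P)$ is not literally a semiclassical pseudodifferential operator on $M$. However, because $A$ is supported away from $\partial M$, one can insert cutoffs $\tilde\chi\in C_c^\infty(M^o)$ with $\tilde\chi\equiv 1$ near supp$(1-\chi)$ and reduce to analyzing $\tilde\chi f(P)\tilde\chi$, which agrees with an interior semiclassical $\Psi$DO modulo $O(h^\infty)$ errors in operator norm (and in trace norm when composed with $A$); the boundary contribution does not affect the leading trace asymptotic. This is exactly the localization used in \cite[Appendix A]{DyZw}, on which the statement is modeled.
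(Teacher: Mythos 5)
Your first two steps match the paper exactly: rewrite the integral as $\la(1-\chi)\psi_j,\psi_j\ra$ and apply \eqref{eqn:Trace} to pass to $(2\pi h)^d\text{Tr}(\Pi_{[a,b]}(1-\chi))+o(1)$, at which point the paper simply cites \cite[Lemma A.2]{DyZw}. You then try to reprove that cited lemma, and this is where the argument breaks.

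The fatal error is the assertion that $A=1-\chi$ is ``supported in a compact subset of $M^o$.'' It is not: $\chi\in C_c^\infty(M^o)$, so $1-\chi\equiv 1$ on a neighborhood of $\partial M$. Consequently the proposed localization --- choosing $\tilde\chi\in C_c^\infty(M^o)$ with $\tilde\chi\equiv 1$ near $\textup{supp}(1-\chi)$ and reducing to $\tilde\chi f(P)\tilde\chi$ --- is impossible, since $\textup{supp}(1-\chi)$ touches $\partial M$ and no such compactly supported $\tilde\chi$ exists. That is precisely the point of the lemma: it controls the eigenfunction mass that \emph{does} live near the boundary, where $f(P)$ is not an interior semiclassical $\Psi$DO and the naive local Weyl law for $\textup{Tr}(f(P)(1-\chi))$ cannot be invoked as stated.

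The sandwich $\Pi_{[a,b]}\leq f(P)$ and the cyclicity/monotonicity steps are fine, but to evaluate $\textup{Tr}(f(P)(1-\chi))$ one should split $1-\chi=1-\chi$ and treat the two pieces by different tools: $\textup{Tr}(f(P))$ is governed by the \emph{global} Weyl law for the Dirichlet realization (valid up to the piecewise smooth boundary), giving $(2\pi h)^d\textup{Tr}(f(P))\to\int f(p)\,d\mu_\sigma$, while $\textup{Tr}(f(P)\chi)$ falls under the \emph{interior} local Weyl law because $\chi$, not $1-\chi$, is the compactly supported factor. Subtracting yields $(2\pi h)^d\textup{Tr}(f(P)(1-\chi))\to\int f(p)(1-\chi)\,d\mu_\sigma$, and the rest of your chain of inequalities then closes the argument. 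As written, your justification of the key asymptotic rests on a false support claim and would not survive scrutiny, even though the intermediate reductions (use of \eqref{eqn:Trace}, the operator sandwich, monotonicity of the trace) are correct.
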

\begin{proof}
This follows directly from \cite[Lemma A.2]{DyZw} and Lemma \ref{lem:HS} once we observe that 
$$(2\pi h)^d\sum_{E_j\in[a,b]}\int_M(1-\chi)|\psi_j|^2d\textup{Vol}=(2\pi h)^d\sum_{E_j\in [a,b]}\la (1-\chi)\psi_j,\psi_j\ra=(2\pi h)^d\text{Tr }\Pi_{[a,b]} (1-\chi) +o(1).$$
\end{proof}

The following lemma allows us to take arbitrary collections of orthogonal quasimodes and complete them. Hence, we only have to prove results for complete sets of quasimodes and they follow for positive density sets.
\begin{lemma}
\label{lem:completeQuas}
Let $\{(\psi_j,E_j), j=1,...,J\}$ be a set of almost orthogonal quasimodes with 
$$\|\psi_j\|_{L^2}=1,\quad \|(P-E_j)\psi_j\|_{L^2}=o(h^{2d+1}),\quad |\la \psi_j,\psi_k\ra|=o(h^{2d}),\text{ }j\neq k.$$
 Then, there exists $\{\varphi_k$, $k=1,...K\}$ such that the set $\{\varphi_k, k=1,...,K\}\cup\{\psi_j, j=1,...,J\}$ is a complete set of quasimodes for $P$. 
\end{lemma}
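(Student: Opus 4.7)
First, I would expand each given quasimode in the orthonormal eigenbasis $\{u_k\}$ of $P$: write $\psi_j = \sum_k c_{jk} u_k$ where $P u_k = E_k u_k$ (with multiplicity). The hypothesis $\|(P-E_j)\psi_j\|^2 = o(h^{4d+2})$ translates to $\sum_k (E_k - E_j)^2 |c_{jk}|^2 = o(h^{4d+2})$, so the spectral mass of $\psi_j$ concentrates in a small window around $E_j$; in particular, $|c_{jk}| \le \|r_j\|/|E_j - E_k|$ for $E_k \ne E_j$, where $r_j := (P-E_j)\psi_j$.

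Next, I would produce an injective assignment of each $\psi_j$ to a distinct eigenvalue of $P$ via Hall's marriage theorem. Form the bipartite graph between $\{1,\ldots,J\}$ and the eigenvalues of $P$ (with multiplicity), placing an edge $j \sim k$ iff $|E_j - E_k| \le \epsilon$, where $\epsilon = o(h^{2d+1})$ is chosen with $\max_j \|r_j\|/\epsilon \to 0$. Hall's condition holds by contradiction: if $|N(S)| < |S|$ for some $S$, then spectral concentration gives $\|(I-\Pi_{N(S)})\psi_j\|^2 \le \epsilon^{-2}\|r_j\|^2 = o(1)$ for every $j \in S$, so the projected vectors $\Pi_{N(S)}\psi_j$ form $|S|$ nearly orthonormal vectors in an $|N(S)|$-dimensional subspace, which is impossible for $h$ small. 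This yields an injection $\sigma$ with $|E_j - E_{\sigma(j)}| \le \epsilon$; the triangle inequality then upgrades each $\psi_j$ to a genuine quasimode for the true eigenvalue $E_{\sigma(j)}$, since $\|(P-E_{\sigma(j)})\psi_j\| \le \|r_j\| + \epsilon = o(h^{2d+1})$.

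For each eigenvalue $E_k$ with $k \notin \sigma(\{1,\ldots,J\})$, I would construct the complementary quasimode $\varphi_k$ inside $V^\perp$, where $V = \mathrm{span}\{\psi_j\}$. Two natural candidates are (i) orthonormal eigenfunctions of the Galerkin compression $\Pi_{V^\perp} P \Pi_{V^\perp}$ restricted to $V^\perp$, and (ii) normalizations of $(I - \Pi_V)u_k$ followed by a mutual Gram-Schmidt cleanup. In either case, $\varphi_k \perp V$ makes $\langle \varphi_k, \psi_j\rangle = 0$ automatic, and the count matches by a min-max/Cauchy interlacing comparison between the spectra of $P$ and $\Pi_{V^\perp} P \Pi_{V^\perp}$, so $\{\psi_j\} \cup \{\varphi_k\}$ contains one quasimode per eigenvalue of $P$ with multiplicity.

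The main technical obstacle is proving the sharp quasimode bound $\|(P - E_k) \varphi_k\| = o(h^{2d+1})$. Writing $(P-E_k)\varphi_k = -(P-E_k)\Pi_V u_k / \|(I-\Pi_V)u_k\|$ and expanding $\Pi_V u_k \approx \sum_j \overline{c_{jk}}\psi_j$ using $G^{-1} = I + O(h^d)$, the residual becomes $\sum_j \overline{c_{jk}}\bigl[r_j + (E_j - E_k)\psi_j\bigr]$. A crude Hilbert-Schmidt bound on the operator $a \mapsto \sum_j a_j r_j$ gives only $o(h^{3d/2+1})$, which is not sharp enough for $d \ge 1$. The crux is to split the $j$-sum into "near" indices with $|E_j - E_k| \le \epsilon$ (whose cardinality is $O(1)$, since the density of the $E_j$'s is bounded by the Weyl-law density $\lesssim h^{-d}$ so a window of width $o(h^{2d+1})$ contains $O(1)$ of them) and "far" indices (where $|c_{jk}(E_j - E_k)| \le \|r_j\|$ forces $|c_{jk}|$ to be small). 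Balancing these estimates against the Hall parameter $\epsilon$ to extract the full $o(h^{2d+1})$ decay is where the argument becomes delicate.
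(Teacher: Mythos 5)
Your proof takes a genuinely different route from the paper's, and, as you yourself flag, it is not complete: the estimate $\|(P-E_k)\varphi_k\| = o(h^{2d+1})$ for the complementary quasimodes is left unresolved, and the crude Hilbert--Schmidt bound you derive, $o(h^{3d/2+1})$, is not strong enough when $d\ge 1$. That is a genuine gap, not a detail. The near/far splitting you sketch does not obviously close it either, because the cardinality of the ``near'' set is only controlled after a spectral clustering argument that your global set-up does not supply.

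The paper avoids the difficulty entirely by localizing in the spectrum \emph{before} doing any linear algebra. Using the Weyl law it shows that near any energy $E$ there are spectral gaps of width at least $ch^{d+1}$, much larger than the quasimode residual $o(h^{2d+1})$, so each $\psi_j$ with $E_j$ near $E$ lies (up to negligible error) in the span of the eigenfunctions belonging to a single finite cluster $\Lambda$. Within that cluster the coefficient vectors $b_j$ of the $\psi_j$ are almost orthonormal in $\mathbb{R}^N$ with $N=|\Lambda|=O(h^{-d})$; Gram--Schmidt extends them to a full orthonormal basis of $\mathbb{R}^N$, and the new basis vectors define the $\varphi_k$ as combinations of eigenfunctions \emph{inside the cluster}. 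Almost-orthogonality to the $\psi_j$ is then automatic, and the quasimode bound on $\varphi_k$ is inherited from the narrowness of the cluster rather than extracted from a delicate sum over all $j$. Your Hall's-marriage-theorem step is also superfluous in this scheme: the paper replaces it with a one-line rank comparison (via Lazutkin, Prop.~32.4) showing $\dim \Pi\,\mathrm{span}\,\Lambda' = |\Lambda'| = \mathrm{rank}\,\Pi$, i.e.\ the quasimodes already span the full cluster range, which is exactly what makes the complementary construction tractable. In short, the cluster-by-cluster reduction is the missing idea; without it the global construction you attempt runs squarely into the obstruction you correctly identify.
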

\noindent {\bf Remark:} Note that this lemma applies to sets of quasimodes that do not have positive density. 
\begin{proof}
Let $u_j$ be eigenfunctions of $P$ corresponding to $E_j$. Select an energy $E$. Then, by the proof of Lemma \ref{lem:repEigenFunc}, there are gaps at distance $c_1(h)$ of size $h^{d+1}$ in the spectrum of $P$ near $E$. Let $\Lambda=\{u_k:|E_j-E|\leq c_1(h)\}$.  Let $\Lambda ':=\{\psi_j:|E_j-E|= o(h^{d+1})\}$. Then, for $\psi_j\in \Lambda '$, we have that 
$$\psi_j=\sum_{u_k\in \Lambda}c_{jk}u_k.$$
Now, define $N:=|\Lambda|$ and $M=|\Lambda '|$ letting $b_j=(c_{j1},c_{j2},...,c_{jN})$, $j=1,..M$. Let $\{e_1,...,e_m\}$ be an orthonormal basis for span $(\{b_j:j=1,...,M\})$.  Apply the Gram-Schmidt process to obtain an orthonormal basis $\{e_1,...,e_M,v_{M+1},...,v_N\}$ of $\re^N$ where 
$$v_k=(v_{k1},...,v_{kN}).$$
Then, letting 
$$\varphi_k=\sum_{j}v_{kj}u_j,\quad M+1\leq k\leq N,$$
 $\{\psi_1,...,\psi_M,\varphi_{M+1},...,\varphi_N\}$ is an almost orthonormal basis for span $\Lambda$. Repeating this process for each cluster, we obtain a complete set of quasimodes. 
\end{proof}

We also need the following restatement of results in \cite[Section 4.3]{christianson} that is found in \cite[Lemma A1]{DyZw}. 
\begin{lemma}
\label{lem:FIO}
Fix $T>0$. Assume that $A\in \Psi^{comp}(M^o)$ is supported away from the boundary of $M$ and $WF_h(A)\subset p^{-1}([a,b])\setminus \mcal{B}_T.$ Then, for each $\chi\in C_c^\infty(M^o)$ and for each $t\in[-T,T]$, the operator $\chi e^{-itP/h}A$ is a Fourier integral operator supported away from $\partial M$ and associated to the restriction of $\varphi_t$ to a neighborhood of $WF_h(A)\cap\varphi_t^{-1}(\text{supp }\chi)$, plus an $O_{L^2\to L^2}(h^\infty)$ remainder. The following version of Egorov's Theorem holds:
$$\chi e^{itP/h}Ae^{-itP/h}=A_{t,\chi}+O(h^\infty)_{L^2(M)\to L^2(M)},$$
where $A_{T,\chi}\in \Psi^{comp}(M^o)$ is supported away from $\partial M$ and $\sigma_h(A_{T,\chi})=\chi(a\composed \varphi_t).$
\end{lemma}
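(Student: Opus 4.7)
Since Lemma \ref{lem:FIO} is stated as a restatement of results in \cite{christianson} and \cite{DyZw}, the task is essentially to verify that those constructions apply under the present hypotheses. The strategy is to build a microlocal parametrix for $\chi e^{-itP/h}A$ using the broken flow, and then read off the Egorov statement by composition.

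First, because $WF_h(A)$ is compact and disjoint from $\mcal{B}_T$, any broken flowline starting in $WF_h(A)$ undergoes only finitely many reflections in $[-T,T]$, each transversal (landing in $\Omega_E^+\sqcup\Omega_E^-$) and at a smooth boundary point. I would partition $[-T,T]$ by these reflection times. Between reflections, on a neighborhood of the relevant orbit segments that stays away from $\partial M$, the propagator $e^{-isP/h}$ is a standard semiclassical FIO quantizing $\exp(sH_p)$, constructed by the usual WKB parametrix (e.g.\ \cite[Ch.~11]{EZB}). At each reflection, I would use a boundary parametrix in local coordinates in which $\partial M=\{x_d=0\}$: write the incoming WKB wave as $e^{i\phi_+/h}a_+$ and add a reflected wave $e^{i\phi_-/h}a_-$ whose phase and amplitude are chosen to cancel the boundary trace modulo $O(h^\infty)$. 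Because we are bounded away from glancing, $\phi_-$ generates a smooth Lagrangian that is exactly the image of the incoming Lagrangian under the reflection map $R\colon\Omega_E^+\to\Omega_E^-$. Composing the free pieces with these reflections and multiplying on the left by $\chi\in\CIc(M^o)$ produces the claimed FIO representative of $\chi e^{-itP/h}A$, associated to the restriction of $\varphi_t$ to $WF_h(A)\cap\varphi_t^{-1}(\text{supp }\chi)$, modulo $O_{L^2\to L^2}(h^\infty)$.

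The Egorov statement then follows by composing on the left with $e^{itP/h}$, which is itself an FIO (by the same construction with reversed time) whose canonical relation is microlocally the graph of $\varphi_{-t}$ on $\varphi_t(WF_h(A))$. Clean composition of FIOs with inverse graph canonical relations yields a pseudodifferential operator, and the standard transport equations along $\varphi_t$ give $\sigma_h(A_{t,\chi})=\chi\cdot(\sigma_h(A)\composed\varphi_t)$. The main obstacle is the reflection parametrix itself: one must simultaneously track the reflected phase (so that it generates the correct Lagrangian in $T^*\widetilde M$) and the amplitude (solving the transport equation across the boundary) uniformly over a neighborhood of $WF_h(A)$. Fortunately, the nonglancing hypothesis $WF_h(A)\cap\mcal{B}_T=\emptyset$ reduces this to the classical hyperbolic-reflection case, which is carried out in detail in \cite{christianson} and repackaged in \cite[Appendix A]{DyZw}; the cutoff $\chi$ supported in $M^o$ absorbs the remaining boundary-adjacent contributions into the $O(h^\infty)$ error.
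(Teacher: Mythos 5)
The paper does not give a proof of Lemma \ref{lem:FIO}; it is cited directly from \cite[Section 4.3]{christianson} as restated in \cite[Lemma A1]{DyZw}. Your sketch is a faithful outline of the construction in those references: partitioning $[-T,T]$ by the finitely many transversal reflection times guaranteed by $WF_h(A)\cap\mcal{B}_T=\emptyset$, using the free WKB parametrix between reflections, gluing across reflections with a two-term (incoming plus reflected) WKB ansatz whose phases are matched to cancel the Dirichlet trace modulo $O(h^\infty)$ in the non-glancing regime, and then obtaining Egorov by composing with the time-reversed propagator. This is essentially the same route as the cited proofs, so the proposal is correct and consistent with how the paper handles the lemma.
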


Now, we prove Theorem \ref{thm:QEquasi}, following \cite[Appendix A]{DyZw}
\begin{proof}
Let $(\psi_j,E_j)$ be a positive density set of quasimodes. By Lemma \ref{lem:completeQuas}, and the fact that our original quasimodes have positive density, we may assume without loss that $(\psi_j,E_j)$ is a complete set of quasimodes (see Definition \ref{def:complete}). 

We first show that 
$$\limsup_{h\to 0}h^d\sum_{E_j\in [a,b]}\left|\la B\psi_j,\psi_j\ra -\fint_{p^{-1}(E_j)}\sigma_h(B)d\mu_{E_j}\right|=0.$$ Then, by a standard diagonal argument that can be found, for example, in \cite[Theorem 15.5]{EZB}, we extract the set $\Lambda(h)$.

Take $a',b'$ such that $a'<a<b<b'$ and the assumptions \eqref{eqn:PassumeV} and \eqref{eqn:assumeE} hold for $E\in[a',b']$. (If they do not hold when $E\notin[a,b]$, then we need to take $a',b'$ getting close to $a$ and $b$. e.g. Let $a'=a-1/T$ and $b'=b+1/T.$ Then estimate the extra contributions by the Weyl law.) Now, fix a $T>0$ and choose $\chi_T\in C_c^\infty(M^o)$ with $0\leq \chi_T\leq 1 $ and 
$$\int_{T^*M\cap p^{-1}([a'-1,b'-1])}(1-\chi_T)^2d\mu_\sigma\leq T^{-1}.$$
Let $\psi\in C_c^\infty(a'-1,b'+1)$ have 
$$\psi(E)\int_{p^{-1}(E)}\chi_Td\mu_E=\int_{p^{-1}(E)}\sigma_h(B)d\mu_E,\text{  } E\in[a',b'].$$
Then, by Lemma \ref{lem:CutoutBdry}, it is enough to show that the conclusion holds for $B-\psi(P(h))\chi_T$, whose symbol integrates to 0 on $p^{-1}(E).$ Therefore, we assume, without loss, that 
$$\int_{p^{-1}(E)}\sigma_h(B)d\mu_E=0,\text{  } E\in[a',b'].$$

By elliptic estimates, we may assume that $WF_h(B)\subset p^{-1}((a',b'))$. More specifically, $B\in \Psi^{comp}.$ Thus, we can write $B=B_T'+B_T''$, where $WF_h(B_T')\cap \mcal{B}_T=\emptyset$ and $\|\sigma_h(B_T'')\|_{L^2(p^{-1}[a',b'])}\leq T^{-1}.$ 

Then, by H\"{o}lder's inequality, Lemma \ref{lem:HS} and \cite[Lemma 2.2]{DyZw}, we have that 
\begin{eqnarray}
h^d\sum_{E_j\in[a,b]}|\la B_T''\psi_j,\psi_j\ra|&\leq& C\left(h^d\sum_{E_j\in[a,b]}\| B_T''\psi_j\|^2\right)^{1/2}\nn
&=&C\left(h^d\sum_{E_j\in[a,b]}\| B_T''u_j\|^2\right)^{1/2}+o(1)\nn 
&\leq& C\|\sigma_h(B_T'')\|_{L^2(p^{-1}[a',b'])}+o(1).\nonumber 
\end{eqnarray}
Hence, the contribution of $B_T''$ goes to 0 in the limit $\lim_{T\to\infty}\limsup_{h\to 0}$ and we may replace $B$ by $B_T'$. 

Now, by Duhamel's formula and the unitarity of $e^{itP/h}$,
$$e^{itP/h}\psi_j=e^{itE_j/h}\psi_j+\frac{i}{h}\int_0^te^{i(t-s)P/h}o(h^{2d+1})=e^{itE_j/h}\psi_j+o_T(h^{2d}).$$
So, defining
$$\la A\ra_T:=\recip{T}\int_0^Te^{itP/h}Ae^{-itP/h}dt,$$
and using Lemma \ref{lem:HS} and H\"{o}lder's inequality, we have 
\begin{eqnarray}
h^d\sum_{E_j\in[a,b]}|\la B_T'\psi_j,\psi_j\ra|&=&h^d\sum_{E_j\in[a,b]}|\la B_T'e^{-itE_j/h}\psi_j,e^{-itE_j/h}\psi_j\ra|\nn 
&=&h^d\sum_{E_j\in[a,b]}|\la \la B_T'\ra_T\psi_j,\psi_j\ra|+o_T(h^{2d})\nn 
&\leq &\left(h^d\sum_{E_j\in[a,b]}(\| \la B_T'\ra_T\psi_j\|+o_T(h^{2d}))^2\right)^{1/2}
\nn
&=&\left(h^d\sum_{E_j\in [a,b]}\| \la B_T'\ra_Tu_j\|^2\right)^{1/2}+o_T(1)\nonumber
\end{eqnarray}

From this point forward, the proof is identical to that in \cite[Theorem A.2]{DyZw}.
By Lemma \ref{lem:FIO}, $\la B_T'\ra_T\chi_T$ is, up to an $O(h^\infty)_{L^2\to L^2}$ remainder, a pseudodifferential operator in $\Psi^{comp}$ compactly supported inside $M^o$ and with principal symbol
$$\sigma_h(\la B_T'\ra_T\chi_T)=\frac{\chi_T}{T}\int_0^T\sigma_h(B_T')\composed \varphi_tdt.$$

Now, all that remains to show is 
$$\lim_{T\to\infty}\limsup_{h\to 0}h^d\sum_{E_j\in [a,b]}\| \la B_T'\ra_Tu_j\|^2=0.$$
Since, by Lemma \ref{lem:CutoutBdry}, 
$$\limsup_{h\to 0}(2\pi h)^d\sum_{E_j\in[a,b]}\|(1-\chi_T)u_j\|_{L^2}^2\leq T^{-1},$$
we can replace $\la B_T'\ra_T$ by $\la B_T'\ra \chi_T.$ Thus, by \cite[Lemma 2.2]{DyZw}, it remains to show that 
\begin{equation}
\label{eqn:ergodicApp}\lim_{T\to\infty}\|\sigma_h(\la B_T'\ra_T\chi_T)\|_{L^2(p^{-1}([a',b']))}\leq \lim_{T\to \infty} \|\la \sigma_h(B_T')\ra_T\|_{L^2(p^{-1}([a',b']))}=0.\end{equation}
To do this, write 
$$\|\la \sigma_h(B_T')\ra_T\|_{L^2(p^{-1}(E))}\leq \|\la \sigma_h(B)\ra_T\|_{L^2(p^{-1}(E))}+\|\la \sigma_h(B''_T)\ra_T\|_{L^2(p^{-1}(E))}.$$
The first term on the right goes to 0 when $T\to \infty$ by the von Neumann ergodic theorem and the second term is bounded by $\|\sigma_h(B_T'')\|_{L^2(p^{-1}(E))}$ and hence also goes to $0$. 
\end{proof}

\noindent{\bf Remark:} Notice that \eqref{eqn:ergodicApp} is the only step in which the ergodicity of the flow is used. This will be important when we adapt the result to ergodic invariant subsets of phase space.

\section{Quantum Ergodicity for Subsets}
\label{sec:QEsubsets}
Now, we prove Theorem \ref{thm:QESubset} using the fact that only step \eqref{eqn:ergodicApp} uses the ergodicity of the flow. For completeness, we include the following elementary lemma. 

\begin{lemma}
\label{lem:measures}
Let $U$ satisfy \eqref{eqn:AssumeSubset}, let $\mu_2$ be a finite measure on $p^{-1}(E)$ and suppose that for all $a\in C^\infty(p^{-1}(E))$ compactly supported away from $\partial M$ with $\int_{U\cap p^{-1}(E)} ad\mu_E=0$, we have $\int_{U\cap p^{-1}(E)} ad\mu_2=0$. Then, $\mu_2|_{U\cap p^{-1}(E)}\equiv c\mu_E|_{U\cap p^{-1}(E)}$ for some $c\geq 0$.
\end{lemma}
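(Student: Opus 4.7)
The plan is to exploit a simple linear-algebra observation: the hypothesis says that the linear functional $L_2(a):=\int_{U\cap p^{-1}(E)} a\,d\mu_2$ vanishes on the kernel of $L_1(a):=\int_{U\cap p^{-1}(E)} a\,d\mu_E$ on the test space $\mathcal{T}$ of $C^\infty(p^{-1}(E))$ functions compactly supported away from $\partial M$. Any linear functional annihilating the kernel of a nonzero linear functional is a scalar multiple of it, so one expects $L_2 = c L_1$ for some $c$. The proof then consists of pinning down $c$, verifying $c\geq 0$, and promoting this equality of functionals to an equality of measures.

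First I would fix a nonnegative test function $a_0\in\mathcal{T}$ with $L_1(a_0)>0$. Such an $a_0$ exists because $\mu_E(U\cap p^{-1}(E))>0$ by \eqref{eqn:AssumeSubset}, while $\mu_E$ places no mass on $\partial M\cap p^{-1}(E)$ under \eqref{eqn:PassumeV}; hence $U\cap p^{-1}(E)$ meets some compact subset of $p^{-1}(E)\setminus \partial M$ in positive $\mu_E$-measure, where a smooth bump function can be placed. Define $c:=L_2(a_0)/L_1(a_0)$; since $\mu_2$ is a positive finite measure and $a_0\geq 0$, one has $c\geq 0$.

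Next, for an arbitrary $a\in\mathcal{T}$, the function $b:=a-(L_1(a)/L_1(a_0))\,a_0$ lies in $\mathcal{T}$ and satisfies $L_1(b)=0$, so the hypothesis yields $L_2(b)=0$. Rearranging gives $L_2(a)=c\,L_1(a)$ for every $a\in\mathcal{T}$. Thus $\mu_2|_{U\cap p^{-1}(E)}$ and $c\,\mu_E|_{U\cap p^{-1}(E)}$ assign equal integrals to every admissible test function.

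Finally, since both measures are finite and hence Radon, a standard approximation by smooth cutoffs promotes this equality of integrals to equality of the Borel measures on $(U\cap p^{-1}(E))\setminus\partial M$. The main (mild) obstacle is the behavior on $U\cap\partial M\cap p^{-1}(E)$, which the test functions in $\mathcal{T}$ do not probe; however, this set has $\mu_E$-measure zero by \eqref{eqn:PassumeV}, and in the context in which the lemma is applied $\mu_2$ likewise does not charge $\partial M$ (it arises as a defect measure tested against symbols compactly supported in $M^o$), so the identity extends to all of $U\cap p^{-1}(E)$ as stated.
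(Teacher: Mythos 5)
Your argument is essentially the paper's, phrased in slightly more abstract linear-algebra language. Where you fix a nonnegative $a_0$ with $L_1(a_0)>0$ and set $c := L_2(a_0)/L_1(a_0)$, the paper fixes a cutoff $\chi$ normalized so that $\int_V \chi\,d\mu_1 = 1$ (with $\mu_1 = \mu_E/\mu_E(V)$) and sets $c_2 := \int_V \chi\,d\mu_2$; your step $b := a - (L_1(a)/L_1(a_0))a_0$ is exactly the paper's $a - \bar a\,\chi$. Two minor points where your write-up is arguably cleaner: you note $c \ge 0$ follows from positivity of $\mu_2$ and $a_0 \ge 0$ without needing $c > 0$ (the paper's phrasing $c_2 > 0$ tacitly assumes $\mu_2|_V \ne 0$, which is unnecessary since the $c=0$ case is trivial), and you explicitly flag that the test functions do not probe $U \cap \partial M$, whereas the paper passes over this by appealing to exhaustion by compact subsets and to the measures being positive distributions of order $0$. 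Your appeal to the lemma's context ($\mu_2$ arises as a defect measure supported in $M^o$) to close that last gap is a bit circular as a proof of the lemma as stated, but it correctly identifies the same implicit assumption the paper makes, so there is no substantive difference in rigor or method.
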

\begin{proof}
Let $V=U\cap p^{-1}(E).$ Let $\mu_1=(\mu_E(V))^{-1}\mu_E$. Then, $\mu_1(V)=1$. Define $\chi\in C^\infty(p^{-1}(E))$ compactly supported away from $\partial M$ with $0\leq \chi\leq 2$ and $\int_V \chi d\mu_1 =1$ and $\int_V \chi d\mu_2=c_2>0$. (To see that such functions exist simply take non-negative approximations to $1_{V}$ with support inside $M^o$.)


Now, let $a\in C^\infty(p^{-1}(E))$ compactly supported away from $\partial M$ with $\int_V ad\mu_1=\bar{a}$. Then, $\int_V a-\bar{a}\chi d\mu_1=0$ and hence $\int_V a-\bar{a}\chi d\mu_2=0$. Therefore, for $a\in C^\infty(p^{-1}(E))$ compactly supported away from $\partial M$, 
$$\int_V ad\mu_2=c_2\int_V ad\mu_1.$$ 
But, $\mu_1|_V$ and $\mu_2|_V$ are positive distributions of order 0 since V satisfies \eqref{eqn:AssumeSubset}. Hence, since
$$\int_V ad\mu_2=\int_Vad(c_2\mu_1)=0$$
for all $a\in C^\infty(p^{-1}(E))$ compactly supported away from $\partial M$, and $M$ can be exhausted by compact subsets, $\mu_2|_V\equiv c_2\mu_1|_V.$
\end{proof}

We now prove Theorem \ref{thm:QESubset}.
\begin{proof}
Lemma \ref{lem:completeQuas} shows that without loss of generality we may work with $\psi_j$ forming a complete set of quasimodes.
Fix $A\in \Psi_h^0(M)$ with symbol $a=\sigma_h(A)$ compactly supported away from $\partial M$ satisfying
\begin{equation}\label{eqn:avgzero}\int_{U\cap p^{-1}(E)}ad\mu_E=0,\quad E\in [a,b].\end{equation} 
Fix $\e>0$. Then, let $U_\e$ be open with $\overline{U}\subset U_\e$ and $\mu_E((U_\e\setminus U)\cap p^{-1}(E))<\e$ for $E\in [a,b]$ (note that we use \eqref{eqn:AssumeSubset} here).
Let $\chi_\e\in C_c^\infty(U_\e)$ compactly supported away from $\partial M$ with $\chi_\e|_{\bar{U}}\equiv 1$, $0\leq \chi_\e\leq 1$. Then, let $a_\e=\chi_\e a$. 
Based on the arguments used to prove Theorems \ref{thm:QEquasi}, we have that 
\begin{equation}\label{eqn:subsetSum}\limsup_{h\to 0} h^{d}\sum_{E_j\in [a,b]}\left|\la A_\e\psi_j,\psi_j\ra\right|\leq C\|\la a_\e\ra_T\|_{L^2(p^{-1}[a,b])},\end{equation}
 where $A_\e=a_\e(x,hD)$,
$$ \la a_\e\ra_T:=\recip{T}\int_{0}^Ta_\e\composed \varphi_t(x,\xi)dt.$$

We have that $\varphi_t$ is ergodic on $U$ and $U$ is invariant under $\varphi_t$.
Hence, by the von Neumann ergodic theorem $\la a_\e 1_U\ra_T\to-\!\!\!\!\!\int_Ua_\e d\mu_E= 0$ in $L^2(p^{-1}(E))$. Again, by the ergodic theorem, $\la a_\e 1_{U^c}\ra_T\to Pa_\e$ in $L^2(p^{-1}(E))$ with $\|Pa_\e\|_{L^2}\leq C
\|a\|_{L^\infty}\e.$
Hence,
\begin{equation}\label{eqn:subsetRHS}\lim_{T\to \infty}\|\la a_\e\ra_T\|_{L^2(p^{-1}[a,b])}\leq C\|a\|_{L^\infty}\e.\end{equation}

We now show that there exists a full density subset 
$$\Lambda(h)\subset \{(\psi_j,E_j)|a\leq E_j\leq b\}$$
such that for all $\psi_j\in \Lambda(h)$ and all  $A\in \Psi_h^0$ with symbol $a=\sigma_h(A)$ satisfying \eqref{eqn:avgzero}
\begin{equation}\label{eqn:epsto0}
\lim_{\e \to 0}\lim_{h\to 0}\la A_\e \psi_j,\psi_j\ra=0.\end{equation}

We first do this for one such $a\in C^\infty(T^*M)$. Fix $\e>0$ and let 
$$\Gamma(h,\e)=\{a\leq E_j\leq b:|\la A_\e \psi_j,\psi_j\ra|\geq (C\|a\|_{L^\infty}\e)^\recip{2}\}.$$
Then, by the Chebyshev inequality, \eqref{eqn:subsetSum}, and \eqref{eqn:subsetRHS},
$$h^d|\Gamma(h,\e)|\leq (C\e\|a\|_{L^\infty})^{\recip{2}}$$ 
and for $E_j\notin \Gamma(h)$
$$|\la A_\e \psi_j,\psi_j\ra|\leq (C\e\|a\|_{L^\infty})^{\recip{2}}.$$
But, by the Weyl law,
$$|\Gamma(h,\e)|/|\{a\leq E_j\leq b\}|=h^d|\Gamma(h,\e)|/\left(\text{Vol}(p^{-1}[a,b])+o(1)\right)\leq C(\|a\|_{L^\infty}\e)^{1/2}+o(1).$$
Now, let $\e_m\to 0$ and define $\Gamma(h)=\cap_m\Gamma(h,\e_m)$. Then, $\Gamma(h)$ has $0$ density as $h\to 0$. Hence, $\Lambda(h)=\{a\leq E_j\leq b\}\setminus \Gamma(h)$ has full density as $h\to 0$ and has the desired property for the operator $a(x,hD)$. 

To complete the construction of $\Lambda(h)$, first observe that it is enough to consider $A\in \Psi_h^{-\infty}$ since $\psi_j$ are microlocalized in $p^{-1}[a,b]$. Then, take a countable dense set, $\{A_k\}$ of $\Psi_h^{-\infty}\cap\{A\in\Psi_h^0:\|\sigma_h(A)\|_{L^\infty}\leq 1,\text{ }\sigma_h(A)\text{ satisfies \eqref{eqn:avgzero}}\}$ and apply a variant of the standard diagonal argument (contained for example, in \cite[Theorem 15.5]{EZB}). 

Now, suppose that $\{\psi_j\}\subset \Lambda(h)$ is a further subset with defect measure $\mu$. Then, for $A\in \Psi_h^0$ with symbol satisfying \eqref{eqn:avgzero}, we have that 
$$\lim_{h\to 0}\la A_\e\psi_{j},\psi_{j}\ra= \int \sigma_h(A)_\e d\mu=o(1).$$ 
Hence, by the dominated convergence theorem,
$$\int_U\sigma_h(A)d\mu=0.$$
But, since $\{\psi_j\}$ is a subset of a positive density set of quasimodes on $[a,b]$, there exists $E\in [a,b]$ such that supp $\mu\subset p^{-1}(E).$
 Thus, we may apply Lemma \ref{lem:measures} to obtain the result.
\end{proof}

\section{From Semiclassical to Standard Quantum Ergodicity}
\label{sec:semiclassToClass}

For completeness and to present the proof in the quasimode case, we will now pass from Theorem \ref{thm:QEquasi} with $P=-h^2\Delta$ to Theorem \ref{thm:QEquasiClassical}.
\begin{proof}
Let $h^2\lambda_j^2=E_j$ and $\lambda=h^{-1}$, $\chi\in C^\infty$ $\chi(\xi)\equiv 0$ for $|\xi|\leq 1$ and $\chi\equiv 1$ for $|\xi|\geq 2$, and $\chi_\e=\chi(\xi/\e).$ 

Let $\hat{A}$ be a homogeneous pseudodifferential operator of order $0$ on $M$ with symbol compactly supported away from $\partial M$. Define $a_0:=\sigma(\hat{A})$. Then, $a_0$ is homogeneous degree $0$ on $T^*M\setminus\{0\}$. Hence $a_0(x,D)\chi(D)=a_0(x,hD)\chi(hD/h)$. Now, define $A_\e\in \Psi^0(M^o)$ by 
$$A_\e:=a(x,hD)\chi_\e(hD).$$

Theorem \ref{thm:QEquasi} gives, for $0<a<b$, 
$$h^d\sum_{h\lambda_j\in [a,b]}\left|\la A_\e \psi_j,\psi_j\ra -\fint_{p^{-1}(E_j)} \sigma_h(A_\e)d\mu_{E_{j}}\right|\to 0,\text{ }h=\lambda^{-1}\to 0.$$
But, since $\sigma_h(A_\e)=a_0(x,\xi)\chi_\e(\xi)$ and $a_0$ is homogeneous degree $0$, we have
$$\fint_{p^{-1}(E_j)}\sigma_h(A_\e)d\mu_{E_j}=\fint_{S^*M}\sigma(\hat{A})d\mu_L+O(\e).$$
Hence, we need to show,
$$\lim_{\e\to 0}\limsup_{h\to 0}h^d\sum_{h\lambda_j\in[a,b]}\la (\hat{A}-A_\e)\psi_j,\psi_j\ra =0.$$
By H\"{o}lder's inequality, Lemma \ref{lem:HS}, and \cite[Lemma 2.2]{DyZw},
\begin{eqnarray}
h^d\sum_{h\lambda_j\in[a,b]}|\la (\hat{A}-A_\e)\psi_j,\psi_j\ra|&\leq& C\left(h^d\sum_{h\lambda_j\in[a,b]}\|(\hat{A}-A_\e)\psi_j\|^2 \right)^{1/2}
\nn
 &=&C\left(h^d\sum_{h\lambda_j \in [a,b]}\|(\hat{A}-A_\e)u_j\|^2\right)^{1/2}+o(1)\nn 
&\leq& C\|\sigma_h(\hat{A}-A_\e)\|_{L^2(p^{-1}([a,b]))}+o(1).\nonumber
\end{eqnarray}
But, 
$\lim_{\e\to 0}\|\sigma_h(\hat{A}-A_\e)\|_{L^2(p^{-1}([a,b]))}=0.$

Thus, for any $\delta>0$, we have 
$$\limsup_{h\to 0}h^{d}\sum_{h\lambda_j\in[\delta,1]}\left|\la \hat{A} \psi_j,\psi_j\ra -\fint_{S^*M} \sigma(\hat{A})d\mu_{L}\right|=0.$$
All that remains is to show that 
$$\lim_{\delta\to 0}\limsup_{h\to 0}h^{d}\sum_{h\lambda_j\in[0,\delta)}\left|\la \hat{A} \psi_j,\psi_j\ra -\fint_{S^*M} \sigma(\hat{A})d\mu_{L}\right|=0.$$
But, letting $\bar{a}_0=-\!\!\!\!\!\int_{S^*M}\sigma(\hat{A})d\mu_L$, and applying the Weyl law, and H\"{o}lder's inequality,
\begin{eqnarray}
h^{d}\sum_{h\lambda_j\in[0,\delta)}|\la \hat{A} \psi_j,\psi_j\ra -\bar{a}_0|&\leq &C\left(h^{d}\sum_{h\lambda_j\in[0,\delta)}\|(\hat{A}-\bar{a}_0 )\psi_j\|^2\right)^{1/2}\nn 
&=&O(\delta^d)\|(\hat{A}-\bar{a}_0)\|_{L^2\to L^2} +O_\delta (h).\nonumber 
\end{eqnarray}

Once again, to obtain the full density subsequence, we employ a standard diagonal argument.
\end{proof}

We may also make similar arguments to those above to pass to the homogeneous setting for Theorem \ref{thm:QESubset}. In this case, we obtain Theorem \ref{thm:QESubsetClass}.

\begin{proof}
 Let 
 $$\hat{U}:=\{(x,\xi)\in T^*M:(x,\xi/|\xi|)\in U\}.$$
 Then, $\hat{U}$ satisfies \eqref{eqn:AssumeSubset}
 and \eqref{eqn:assumeFlowSubset} for any $0<a<b$. Hence, by Theorem \ref{thm:QESubset} there exists 
 $$\Lambda_\delta(h)\subset \{h\lambda_j\in[\delta, 1]\}$$
 of full density such that for all $\{\psi_j\}\subset \Lambda_\delta(h)$ with defect measure $\mu$, there is an $E\in [\delta, 1]$ such that supp $\mu\subset S_E^*M$ and $\mu|_{\hat{U}\cap S_E^*M)}\equiv c \mu_E|_{\hat{U}\cap S_E^*M}$ where 
 $$S_E^*M=\{(x,\xi)\in T^*M: |\xi|=E\}.$$
 Now, for $a\in C^\infty(T^*M\setminus\{0\})$, homogeneous of degree 0, $\e>0$ small enough and $\chi_\e$ as in the proof of Theorem \ref{thm:QEquasiClassical} 
 $$\int_{\hat{U}\cap S_E^*M}a\chi_\e d\mu_E=\int_{\hat{U}\cap S_E^*M}a d\mu_E=\int_{U}ad\mu_L$$
 and hence the homogeneous defect measure, $\mu$ of has $\mu|_U\equiv c\mu_L|_U.$
 
 But, by the Weyl Law, $h^d|\{h\lambda_j\in[0,\delta)\}|=O(\delta^d).$ Thus, letting $\Lambda(h)=\cup_{\delta>0}\Lambda_\delta(h)$, we have that $\Lambda(h)\subset \{h\lambda_j\in[0,1]\}$ has full density and for all $\{\psi_j\}$ subsequences of $\Lambda(h)$ with homogeneous defect measure $\mu$, there exists $c\geq 0$ with $\mu|_U=c\mu_L|_U.$ 
\end{proof}

\section{An example}
\label{sec:example}
We now present an example to which Theorem \ref{thm:QESubsetClass} applies. Let $\Omega$ be a symmetric mushroom billiard, as in \cite{Bunim}, composed of a hat that is a semicircle of radius 1 and a base that has width $w$ and height $h$ (figure \ref{f:1} shows such a mushroom with two billiard flows). Then $S^*\Omega$ has a subset $U$ satisfying \eqref{eqn:AssumeSubset} and \eqref{eqn:assumeFlowSubset} (see \cite{Bunim}).
Let 
$$u_{nk}=\sin (n\theta)J_n(\alpha_{n,k}r)$$
be the eigenfunctions of the Dirichlet Laplacian on the semidisk. Here $J_n$ denotes the Bessel function of the first kind of order $n$ and $\alpha_{n,k}$ the $k^{\text{th}}$ positive zero of $J_n$. 

From \cite[Appendix A]{Whisp}, we have that for $0<\gamma<\frac{2}{3}$, $z\in (0,1-n^{\gamma-\frac{2}{3}})$
$$0<J_n(nz)<2^{-n^{\gamma/3}}.$$
Also, from \cite{Olver}, we have for $0\leq k<c_2n$, that $n<\alpha_{n,k}<Cn$ for some $C>0$. Thus, $J_n(\alpha_{n,k}r)=O(2^{-n^{\gamma/3}})$ for $r<\recip{2C}$. 

Now, suppose $0<w<\recip{4C}$. Then, let $\chi\in C^\infty(\Omega)$ with $\chi\equiv 0$ in $|r|<\recip{4C}$ and $\chi\equiv 1$ in $|r|\geq \recip{2C}$. If we let $$v_{n,k}:=\chi u_{n,k},$$ for $0\leq k<c_2n$, and extend by $0$ outside of the hat of $\Omega$, then $v_{n,k}$ are quasimodes for the Dirichlet Laplacian on $\Omega$ with 
$$(-\Delta-\alpha_{n,k}^2)v_{n,k}=O(2^{-n^{\gamma/3}}).$$

Now, by the orthogonality of $u_{n,k}$, $v_{n,k}$ are orthogonal up to $O(e^{-Cn^{\gamma/3}})$. Hence, $\{v_{n,k}\}$ are a family of $\frac{c_2n(c_2n+1)}{2}$ almost orthogonal quasimodes with $O(e^{-Cn^{\gamma/3}})$ error. Thus, $$\{(v_{n,k},\alpha_{n,k}^2),0\leq k<c_2n,n=1,...\}$$ form a positive density set of quasimodes for $-\Delta$ on $[0,1]$. 

Since the $v_{n,k}$ are $O(n^{-\infty})$ quasimodes, $WF_h(v_{n,k})$ is invariant under the Hamiltonian flow (see, for example, \cite[Section 12.3]{EZB}). Therefore, since $WF_h(v_{n,k})$ does not intersect the foot of $\Omega$, the quasimodes must concentrate away from the ergodic set, $U$. Thus, for any subsequence of $\{v_{n,k}\}$ with a defect measure $\mu$, $\mu|_U\equiv 0$. Hence, Theorem \ref{thm:QESubset} applies and the constant we obtain is $0$.

\end{document}